\newtheorem{theorem}{Theorem}
\theoremstyle{definition}
\newtheorem{corollary}{Corollary}
\theoremstyle{remark}
\newcommand{\setxysizemod}{\epsfxsize=2in}
\newcommand{\lb}{\left(}
\newcommand{\rb}{\right)}
\newcommand{\lcb}{\left\{}
\newcommand{\rcb}{\right\}}
\begin{document}

\title{On the Secrecy Capacity of $2$-User Gaussian Z-Interference Channel with Shared Key} 

\author{\IEEEauthorblockN{Somalatha U and Parthajit Mohapatra}\vspace{-1cm}
\thanks{The authors are with the Dept. of Electrical Engineering, Indian Institute of Technology Tirupati 517 506, India (e-mails:\{ee19d002, parthajit\}@iittp.ac.in)}}
\maketitle
\begin{abstract}
In this paper, the role of secret key with finite rate is studied to enhance the secrecy performance of the system when users are operating in interference limited scenarios. To address this problem, a $2$-user Gaussian Z-interference channel with secrecy constraint at the receiver is considered. One of the fundamental problems here is how to use the secret key as a part of the encoding  process. The paper proposes novel achievable schemes, where the schemes differ from each other based on how the key has been used in the encoding process. The first achievable scheme uses a combination of \emph{key rate splitting}, \emph{one-time pad}, \emph{stochastic encoding} and \emph{superposition coding}. In this scheme, one part of the key is used for one-time pad and the remaining part of the key is used for wiretap coding. The encoding is performed such that the receiver experiencing interference can decode some part of the interference without violating the secrecy constraint. As a special case of the derived result, one can obtain the secrecy rate region when the key is completely used for one-time pad or part of the wiretap coding. The second scheme uses the shared key to encrypt the message using one-time pad and in contrast to the previous case no interference is decoded at the receiver. The paper also derives an outer bound on the sum rate and secrecy rate of the transmitter which causes interference.  The main novelty of deriving outer bound lies in the selection of side information provided to the receiver and using the secrecy constraint at the receiver. The derived outer bounds are found to be tight depending on the channel conditions and rate of the key. The scaling behaviour of key rate is also explored for different schemes using the notion of  \emph{secure generalized degrees of freedom (SGDOF)}.  The optimality of different schemes are characterized for some specific cases. The developed results provide new insights on the performance of different achievable schemes under varied channel conditions and rate of the key. The developed results show the importance of key rate splitting in enhancing the secrecy performance of the system when users are operating under interference limited environment. 
\end{abstract}
\begin{IEEEkeywords}
Information theoretic security, Z-Interference channel, stochastic encoding, wiretap channel and generalized degrees of freedom
\end{IEEEkeywords}
\section{Introduction}
The upcoming 5G and beyond wireless communication technologies will be a key enabler for the Internet of Things (IoT). These networks are expected to support immense user connections and increasing wireless services. As the wireless communication is prone to attack, ensuring information security over such networks is of paramount importance. The result in \cite{shannon-bell-1949} has shown that one-time pad can achieve perfect secrecy for the point-to-point channel provided the key size is as large as that of the message size. When the shared key is used independent of the channel codes in case of point to point scenario, the performance of the system  is either limited by the rate of the key or the capacity of the channel. In multi-user scenarios, users can have varied knowledge of the key and using the key independent of the channel code can result in sub optimal performance of the system. In general, users cannot use interference cancellation or joint decoding technique to improve their performance as it will violate the secrecy condition. This brings a fundamental question on how to use the secret key as a part of the channel coding to improve the performance of the system in interference limited scenarios. To address this problem, this paper considers a 2-user Gaussian Z-Interference channel (ZIC) with shared key of finite rate. 

The Z-Interference channel (Z-IC)~\cite{liu2004capacity, mohapatra2016secrecy, li2008secrecy} has been studied in existing literature with and without secrecy constraint. In this case, only one of the users causes interference. In practice, the Z-IC model can capture heterogeneous scenarios, where the user which is far from the unintended transmitter does not experience any interference.  Although the Z-IC is simplified model in comparison to the  interference channel, its capacity region is not known fully known~\cite{mohapatra2016secrecy, liu2009capacity} even without secrecy constraint at the receiver, except for some specific cases. The Z-IC model still captures the broadcast and superposition nature of the wireless medium and  study of such model can provide useful insights on system performance. 
\subsection{Related work} 
Information theoretic secrecy has given new insights on designing secure communication schemes using the randomness present in wireless communication system. The work in \cite{wyner-bell-1975} has shown that noise present in communication system can be useful for secure communication.  The study of different fundamental models of wireless communication with secrecy constraints such as broadcast channel and interference channel has given new insights on the performance limit ~\cite{wyner-bell-1975, csiszar1978broadcast, liu2008discrete}. The interference channel which captures the impact of concurrent transmission has been studied with and without secrecy constraint under different settings ~\cite{liu2008discrete, mohapatra-tit-2016, etkin-TIT-2008}. The results related to other communication models with secrecy constraints can be found in~\cite{xu-tifs-2013, fritschek-tifs-2019, fayed-asilomer-2016}. 
 
When users have access to a secret key, the performance of the wireless network depends on how the secret key is used as a part of channel codes and the varied knowledge of the key at the various nodes. The works ~\cite{shannon-bell-1949, yamamoto1997rate, merhav2008shannon, kang2010wiretap, ardestanizadeh2009wiretap} consider a point-to-point system in the presence of an eavesdropper with a shared key between the legitimate nodes under different channel conditions and settings.  The role of the secret key as a part of channel coding for multi-user scenarios has been explored in recent years. The work in ~\cite{schaefer2017secure} considers a 2-user broadcast channel with an external eavesdropper where the transmitter shares key of high rate with respective receivers. The role of the secret key in the interference-limited scenario has been studied ~\cite{sinha2018secrecy} for the 2-user Gaussian interference channel (GIC). When users are operating under interference-limited scenarios, receivers cannot decode the interference to cancel its effect due to secrecy constraint at the receivers. This in turn can reduce the performance of the system significantly. When users have access to shared keys an important question arises how to use the secret key in the encoding and decoding process so that the overall performance of the system can be improved. To answer this question, a $2$-user Gaussian Z interference channel with shared key between transmitter $2$ and receiver $2$ is considered. 

The Z-Interference channel have been studied in the literature with and without secrecy constraints~\cite{liu2004capacity, mohapatra2016secrecy},~\cite{liu2009capacity} and~\cite{salehkalaibar2010capacity}. The capacity region is characterized for a specific class of degraded Z-IC, where the achievable scheme uses a Marton's binning technique \cite{salehkalaibar2010capacity}. The sum capacity of 2-user Gaussian Z-IC  was characterized in~\cite{sason2004achievable}. The capacity region is also characterized for a class of deterministic Z-IC~\cite{gamal1982capacity}. The work in \cite{zhou2012gaussian} considers a Z-IC with a relay link and the achievable scheme uses a combination of Han-Kobayashi scheme and interference forwarding relaying strategy. The capacity region for this model is characterized for the strong interference regime and the asymptotic sum-capacity is characterized for the weak-interference regime.  In~\cite{bagheri2010approximate}, the role of transmitters cooperation is explored for the 2-user Gaussian Z-IC and the achievable scheme uses a combination of  Han-Kobayashi scheme, zero-forcing, and relaying techniques. The role of bidirectional digital communication link between the two receivers is investigated for $2$-user Gaussian Z-IC in~\cite{do2009achievable}.  

The problem of secure communication over 2-user Z-IC with secrecy constraint has been explored in \cite{li2008secrecy}. The capacity region is characterized for the deterministic model and for the Gaussian case the capacity region is characterized within one bit when the direct link is stronger than the interfering link. In~\cite{mohapatra2016secrecy}, the role of transmitter cooperation is studied for Z-interference channel with secrecy constraint at the receiver. The capacity region is characterized for the deterministic model with cooperation and using the insights obtained from the deterministic model, bounds on the secrecy capacity region are obtained for the Gaussian model. The role of shared secret key in enabling secure communication under interference limited scenarios is not well understood from the existing literature. Some initial works in this direction can be found for 2-user Gaussian IC in \cite{sinha2018secrecy}. However, it is not possible to directly use the developed achievable schemes and outer bounds in \cite{sinha2018secrecy} for the Z-IC. 
\subsection{Contributions}
This paper considers a $2$-user Gaussian Z-IC with secrecy constraint, where a key of finite rate is shared between the users who causes interference to the other receiver. One of the fundamental problem arises on how to use the secret key as a part of the channel encoding process. The derivation of the outer bounds are non-trivial as it needs to exploit the secrecy constraint at receiver and  knowledge of the shared key in a judicious manner. The work presents novel bounds on the secrecy capacity region and these bounds perform differently based on the channel strength, rate of the key and how the key has been used as a part of the encoding process. The main contributions of the work are as follows:
\begin{enumerate}
	\item The first achievable scheme is based on the novel idea of key rate splitting, where one part of the key is used for common confidential message and the remaining part of the key is used as a part of wiretap coding. This scheme uses a combination of rate splitting technique, one time pad, stochastic encoding and superposition coding. The novelty of the scheme is that the encoding at the transmitter (which causes interference) is performed such that the unintended receiver can decode some part of the interference and cancel its effect without violating the secrecy criteria. To the best of authors' knowledge, this kind of key rate splitting has not been explored for the interference limited scenario. Using some part of the key, it is possible to send additional message in the wiretap coding. As a special case of the derived result, one can obtain the achievable rate regions for two other cases: when the key is completely used for either encoding the common confidential message\footnote{It is called as common confidential message as the message is decodable at both the receivers.} or part of the wiretap coding. 
	
	\item This work also proposes another achievable scheme, where the key is used to encrypt the message of transmitter~$2$ and this scheme is generally known as one time pad in the literature. The receiver which experiences interference decodes its intended message by treating other user's message as noise. This scheme can be considered as the secure version of  treating interference as noise, which is conventionally used in many wireless systems. 
	
	\item  The work also derives an outer bound on the sum rate by taking account of the secrecy constraint at receiver  and the finite key rate shared between the users. This outer bound is applicable only for the weak/moderate interference regime. Another outer bound is also derived on the secrecy rate of user~$2$ and this bound is found to be tight for the very high interference regime.  The main novelty of deriving the outer bounds lies in the choice of side information provided at the receiver and using secrecy constraint at the receiver in a judicious manner.
	
	\item The work also explores the scaling behaviour of the key rate for different schemes with respect to underlying channel parameters such as SNR and INR for the considered model. To capture this, the notion of secure generalized degrees of freedom (GDOF) is used and the GDOF region for different achievable schemes are characterized in the weak/moderate interference regime. 
\end{enumerate}

The derived results shed insights on the role of shared key in enabling secure communication for interference-limited scenarios. It is found that the achievable scheme based on splitting the key rate performs best among the other schemes considered. As shared randomness between users is a promising theme for secure communication, it is important to understand whether using the key as a part of stochastic encoding is more beneficial or decoding some part of the interference is more useful. The derived GDOF results also help to explore the scaling behaviour of the key rate in terms of SNR or INR to improve the performance of the system, in the weak/moderate interference regime.

\section{System Model}\label{sec:sysmod}
This paper considers a $2$-user Gaussian Z-IC with shared secret key $K$ of finite rate $R_{K}$. The transmitter (Tx-2) which causes interference has access to the key as well as its corresponding receiver (Rx-2) (See Fig.~\ref{fig:ZIC}). The relation between input and output of this model is given below
\begin{align}
 Y_1&= h_{11}X_{1} + h_{21}X_2 + Z_{1}, \nonumber\\
 Y_2& = h_{22}X_{2} + Z_{2}, \label{sysmodel1}
\end{align}

\begin{figure}[h]
\begin{center}
\setxysizemod
  \includegraphics[width=\linewidth]{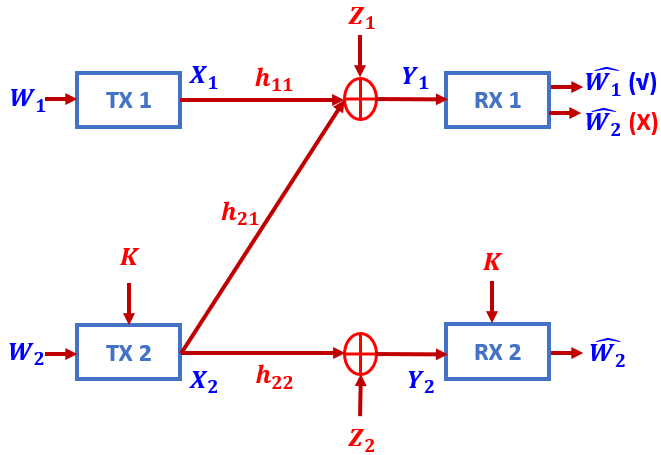}
  \caption{Two user ZIC with secret key.}
  \label{fig:ZIC}
  \end{center}
\end{figure}

where $X_{i}\:(i=1,2)$ represents the real valued transmitted signal which needs to satisfy the average power constraint, i.~e., $E[X_i^2]\leq P_i$. The quantity $Z_{i}\:(i \in \{1,2\})$ represents the real valued additive white Gaussian noise (AWGN)  ($Z_{i} \sim \mathcal{N}(0,1)$) at receiver~$i$.  The channel gain between transmitter~$i$ and receiver~$j$ is denoted by $h_{ij}$. It is also assumed that transmitters and receivers have global channel knowledge.

The transmitter $i$ has a message  $W_i \in \{1, 2, \ldots, 2^{nR_i}\}$ intended for receiver~$i$, but the message of transmitter~$i$ needs to be kept secret from receiver~$j$ $(j\neq i)$. It should be noted that, for the Z-IC, message of transmitter~$1$ is secured due to absence of communication link between transmitter $1$ and receiver $2$. Hence, there is no key shared between transmitter $1$ and receiver $1$. However, the message of transmitter~$2$ is required to be kept secret from receiver~$1$. Both the messages at the transmitters are independent of each other. The shared secret key $K$ is independent of the messages and is  uniformly distributed over the set $\mathcal{K} = \{1, 2, \ldots, 2^{nR_K}\}$. Transmitter~$1$ or receiver~$1$ does not have knowledge of the key.

The weak notion of secrecy is used to measure the secrecy performance of the system. Mathematically, it is defined as $I(W_2;Y_1^n) \leq n \epsilon_n$ where $\epsilon_n \to 0$ as $n \to \infty$ and $n$ corresponds to the length of the codeword \cite{bloch2013strong}. The details of encoding and decoding are presented in the following section. 
\section{Achievable Schemes}\label{achievable schemes}
In this section, achievable schemes are proposed for the considered model and these achievable schemes differ from each other based on how the key has been used in the encoding process. The first achievable scheme is based on the novel idea of key rate splitting, where one  part of the key is used for one-time pad and the remaining part of the key is used in the wiretap coding. The proposed scheme uses a combination of key-rate splitting, one-time pad, stochastic encoding and superposition coding. Another novel aspect of the scheme is that some part of the interference can be cancelled at the unintended receiver without violating the secrecy constraint. As a by product of this derived result, one can obtain the achievable results for two extreme cases: (a) the key is used only for one-time pad and (b) key is used as a part of the stochastic encoding. The paper also proposes another achievable scheme where the key is used to protect the message of transmitter~$2$ using one-time pad. In this case, receiver~$1$ decodes its intended message by treating interference as noise in contrast to the previous scheme where receiver needs to decode some part of the interference. The performance of various schemes depend on how the key has been used as a part of the encoding process, decoding techniques at the receivers, key rate and channel conditions. These schemes are discussed in details in the following sub-sections.
\subsection{Key rate Splitting}\label{achievable scheme:key splitting}
One of the ways to mitigate interference is to decode some part of it or the complete interference based on its strength. When transmitter has confidential data, in general, receiver cannot decode other user's message as it will violate the secrecy constraint. However, when the intended user pairs have access to a key, the transmitter can generate a ciphertext using the key and the ciphertext is encoded such that it remains decodable at both the receivers.  The unintended receiver can determine the ciphertext, but it cannot determine the message as it does not know the key. This can allow receiver~$1$ to decode some part of the interference and cancel its effect without violating the secrecy constraint. This message is termed as \emph{common confidential message} as this is required to be decodable at both the receivers and the rate associated with the message is limited by the channel conditions and the rate of the key used for the encoding.  In addition to this, it may be possible to send an additional message to the receiver~$2$ using stochastic encoding and this message is termed as \emph{private message}. A part of the key can also be used in the wiretap coding and this in turn can help to reduce the loss of rate due to stochastic encoding. This motivates to split the rate of the key into two parts: one part is used to encode the common confidential message and other part is used in the wiretap coding. This is termed as \emph{key rate splitting} in this work. The message at transmitter~$2$ is split into two parts: common confidential message  $(W_{2C})$ and the private message $({W_{2P}^W})$. The private message $({W_{2P}^W})$ is further split into two parts: one part is encoded using stochastic encoding and another part of the message is encoded with the help of key in the wiretap coding.
	
To the best of authors' knowledge, key rate splitting has not been explored in the existing literature for the interference limited scenarios. This also brings an important question on whether it is better to use the entire key for encoding the common confidential message, in the wiretap coding or key rate splitting. One can obtain the achievable secrecy rate regions for these two extreme cases as well from the achievable scheme based on key rate splitting. The achievable result with key rate splitting is stated in the following theorem.
\begin{theorem}\label{theorem-key splitting}
For the $2$-user Gaussian Z-IC with shared key, the scheme based on key rate splitting achieves the following secrecy rate region
\begin{align}
    R_1&\leq I(X_1;Y_1|X_{2C}),\nonumber\\
    R_2&\leq \min\{(I(X_{2C};Y_1|X_1), I(X_{2C};Y_2),\eta R_K\}\nonumber\\
    &\quad+\min\{I(X_{2P};Y_2|X_{2C}), I(X_{2P};Y_2|X_{2C})\nonumber\\
    &\qquad\qquad+ (1-\eta)R_K - R_{2P}' \},\nonumber\\
    R_1+R_2&\leq I(X_1,X_{2C};Y_1) + \min\{(X_{2P}; Y_2|X_{2C}),\nonumber\\
    &\quad I(X_{2P};Y_2|X_{2C}) + (1-\eta)R_K- R_{2P}'\}, \label{achievable secrecy rate region for key splitting} \\
\text{where}\quad R_{2P}' & \geq I(X_{2P};Y_1 | X_1, X_{2C}) \text{ and } 0 \leq \eta \leq 1. \label{equivocation computation}
\end{align}
\end{theorem}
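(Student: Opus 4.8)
The plan is to prove achievability by a layered random-coding argument that combines one-time pad, superposition, and stochastic (wiretap) encoding, and then to verify reliability and weak secrecy separately. First I would split the key into two independent, uniform parts $K_1$ and $K_2$ of rates $\eta R_K$ and $(1-\eta)R_K$, and split the Tx-$2$ message as $W_2=(W_{2C},W_{2P})$ with $W_{2P}=(W_{2P}^{a},W_{2P}^{b})$, where $W_{2P}^{b}$ carries rate at most $(1-\eta)R_K$. Fixing an input distribution $p(x_1)\,p(x_{2C})\,p(x_{2P}\mid x_{2C})$, I would generate $2^{nR_1}$ independent codewords $X_1^n$, $2^{nR_{2C}}$ independent cloud centers $X_{2C}^n$, and, superimposed on each cloud center, a satellite codebook whose $X_{2P}^n$ codewords are indexed by the message part $(W_{2P}^{a},W_{2P}^{b})$ together with a dummy index $M_d$ of rate $R_d$, in the usual wiretap-binning fashion.

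The encoder would one-time pad the common message and the key-assisted private part, setting $\tilde W_{2C}=W_{2C}\oplus K_1$ and $\tilde W_{2P}^{b}=W_{2P}^{b}\oplus K_2$, and transmit $X_1^n(W_1)$ together with the superposition codeword selected by $\tilde W_{2C}$ and by $(W_{2P}^{a},\tilde W_{2P}^{b},M_d)$. Decoding and the accompanying reliability analysis (joint-typicality decoding and the packing lemma) proceed on two fronts. Rx-$2$ decodes the full index and strips the pads with its copies of $K_1,K_2$; this is reliable when $R_{2C}\le I(X_{2C};Y_2)$ and the total satellite rate obeys $R_{2P}^{a}+R_{2P}^{b}+R_d\le I(X_{2P};Y_2\mid X_{2C})$. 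Rx-$1$, which must cancel part of the interference, jointly decodes $(W_1,\tilde W_{2C})$ while treating the satellite signal as noise, which is reliable when $R_1\le I(X_1;Y_1\mid X_{2C})$, $R_{2C}\le I(X_{2C};Y_1\mid X_1)$, and $R_1+R_{2C}\le I(X_1,X_{2C};Y_1)$. The pad budgets further impose $R_{2C}\le\eta R_K$ and $R_{2P}^{b}\le(1-\eta)R_K$.

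The crux, and the step I expect to be the main obstacle, is the weak-secrecy bound $I(W_2;Y_1^n)\le n\epsilon_n$, because the three protection mechanisms must be accounted for jointly. I would start from $I(W_2;Y_1^n)=I(W_{2C};Y_1^n)+I(W_{2P};Y_1^n\mid W_{2C})$. The first term vanishes: $Y_1^n$ depends on $W_{2C}$ only through $\tilde W_{2C}$, and since $K_1$ is uniform, independent, and of rate at least $R_{2C}$, the pad makes $W_{2C}$ independent of $(\tilde W_{2C},Y_1^n)$. For the second term I would peel off $W_{2P}^{b}$, secured identically by the $K_2$ pad, leaving the leakage of the stochastically encoded part $W_{2P}^{a}$. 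The delicate point is that, to Rx-$1$, the padded symbols $\tilde W_{2P}^{b}$ are statistically indistinguishable from the dummy index $M_d$ and therefore act as additional confusion, so the effective randomization rate is $R_{2P}'=R_d+R_{2P}^{b}$. Granting Rx-$1$ the genie side information $(W_1,\tilde W_{2C})$---from which it reconstructs and subtracts $X_1^n$ and $X_{2C}^n$---the standard equivocation computation for stochastic encoding then shows that a confusion rate $R_{2P}'\ge I(X_{2P};Y_1\mid X_1,X_{2C})$ drives the leakage of $W_{2P}^{a}$ to $n\epsilon_n$; since this genie is a deterministic function of $W_1$ and the independent pads, it does not increase the leakage of $W_2$. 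Summing the three contributions yields $I(W_2;Y_1^n)\le n\epsilon_n$.

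Finally I would collect the per-stream constraints. The common rate satisfies $R_{2C}\le\min\{I(X_{2C};Y_1\mid X_1),I(X_{2C};Y_2),\eta R_K\}$. Writing $R_{2P}=R_{2P}^{a}+R_{2P}^{b}$ and $R_{2P}'=R_d+R_{2P}^{b}$, decodability at Rx-$2$ gives $R_{2P}\le I(X_{2P};Y_2\mid X_{2C})$, while $R_{2P}^{a}\le I(X_{2P};Y_2\mid X_{2C})-R_{2P}'$ together with $R_{2P}^{b}\le(1-\eta)R_K$ gives $R_{2P}\le I(X_{2P};Y_2\mid X_{2C})+(1-\eta)R_K-R_{2P}'$, with $R_{2P}'\ge I(X_{2P};Y_1\mid X_1,X_{2C})$ inherited from the secrecy step. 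Eliminating the auxiliary rates $R_{2P}^{a},R_{2P}^{b},R_d$ while retaining $R_{2P}'$ as the free randomization parameter, and substituting $R_2=R_{2C}+R_{2P}$ and $R_1+R_2=R_1+R_{2C}+R_{2P}$, then reproduces the region \eqref{achievable secrecy rate region for key splitting}, with \eqref{equivocation computation} being exactly the secrecy requirement on $R_{2P}'$ and $\eta$.
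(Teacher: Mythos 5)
Your proposal is correct and follows essentially the same route as the paper's proof: key splitting into two one-time pads, message splitting into a common confidential part and a wiretap-coded private part, superposition coding with joint decoding of $(W_1,\tilde W_{2C})$ at receiver~1 and successive decoding at receiver~2, and a weak-secrecy argument that reduces the leakage to the stochastically encoded sub-message given the genie $(X_1^n,X_{2C}^n)$. The only (immaterial) difference is bookkeeping: you fold the padded sub-message rate into the randomization rate $R_{2P}'$, whereas the paper keeps the dummy-section rate and the padded codeword index as separate tiers of the bin structure; both eliminations yield the region in \eqref{achievable secrecy rate region for key splitting}.
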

\begin{proof}
The proof involves error probability analysis at both the receivers and equivocation computation to know how much rate the transmitter $2$ has to sacrifice to keep the private message of its confidential. Due to the asymmetry of the model considered in this work, transmitter~$1$ uses deterministic encoding where the message is encoded using a point-to-point code, and transmitter~$2$ uses a combination of deterministic encoding and stochastic encoding. The decoding at receiver~$1$ is based on jointly typical set decoding and successive decoding is used at receiver~$2$. The details are as follows.

\textbf{\emph{Codebook generation:}} 
Transmitter~$1$ generates a codebook $\mathcal{C}_1$ containing $2^{nR_1}$ independent and identically distributed (i.i.d.) random codewords of length $n$ according to $\mathcal{N}(0, P_1)$. Similarly, transmitter $2$ generates $2^{nR_{2C}}$ codewords for the common confidential message. For the private message, transmitter~$2$ generates $2^{n(R_{2P}+R_{2P}'+{R_{2P}^\oplus})}$ codewords and these codewords are grouped into $2^{n R_{2P}}$ bins where each bin contains $2^{n R_{2P}'}$ sections, and each of which has $2^{n R_{2P}^\oplus}$ codewords. These codebooks are made available to both the receivers.
  
\textbf{\emph{Encoding}}  
Transmitter $1$ uses the classical channel encoder to encode the message $W_1$. Transmitter $2$ splits the message $W_2$ into common confidential message $W_{2C}\in \mathcal{W}_{2C} = \{1,2,...,2^{nR_{2C}}\}$ and private message $W_{2P}^W \in \mathcal{W}_{2P}^W = \{1,2,...,2^{nR_{2P}^W}\}$. The secrecy of $W_{2C}$ can be guaranteed by using the key of rate $\eta R_K$. This part of the key is used to generate the encrypted message $W_{2C}'$ which is encoded using the codebook for the common confidential message. Let ${W_{2P}^W}=(W_{2P},{W_{2P}^\oplus})$, where the message $W_{2P}$ is transmitted  using  stochastic encoding scheme while the message $W_{2P}^\oplus$ is sent as a part of the wiretap coding using the remaining part of the key $(1-\eta)R_K$. The message $W_{2P}$ selects the bin among $2^{nR_{2P}}$ bins. In that bin, one of the sections is picked uniformly at random and in that section the particular codeword is chosen among $2^{nR_{2P}^\oplus}$ based on the encrypted message of ${W_{2P}^\oplus}$. Index the codewords in each section from  1  to  $2^{nR_{2P}^\oplus}$, so that the codewords in the jth section of the ith sub-codebook can be denoted as $X_{2P, C_{ij}}(1), \ldots, X_{2P, C_{ij}}(2^{nR_{2P}^\oplus})$.

Finally, the transmitter~2 superimposes the codewords corresponding to the common confidential message and private message and sent over the channel as given below
\begin{align}
    X_2^n( W_{2C}',W_{2P}, W_{2P}', W_{2P}^\oplus) & = X_{2C}^n(W_{2C}') \nonumber\\
    &\qquad + X_{2P}^n(W_{2P}, W_{2P}', W_{2P}^\oplus)
\end{align}
A schematic representation of the encoding process at transmitter~$2$ is shown in Fig.~\ref{fig:encoding process}.
\begin{figure}[h]
	\begin{center}
		\setxysizemod
		\includegraphics[width=\linewidth]{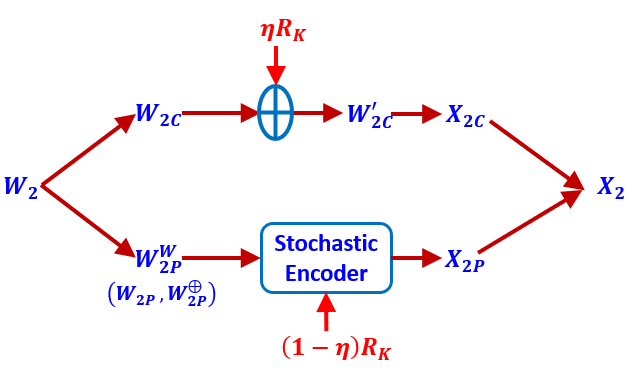}
		\caption{The encoding process at transmitter $2$.}
		\label{fig:encoding process}
	\end{center}
\end{figure}

\textbf{\emph{Decoding and error probability analysis:}} 
Receiver~$1$ decodes its intended message and common message of transmitter~$2$ jointly. The private message of transmitter~$2$ is treated as noise while decoding other messages. Consider the following event where the message of transmitter~$1$, the common confidential message of transmitter~$2$ and output at receiver~$1$ are jointly typical:
\begin{align}
E_{ij}=\{(X_1^n(i),X_{2C}^n(j),Y_1^n) \in A_{\epsilon}^{(n)}\},
\end{align}
Due to the symmetry in the codebook construction process, without loss of generality, one can assume that $(i,j) = (1,1)$ was sent. Then, the probability of error can be written as follows.
\begin{align}
    &P_{e1}^{(n)}=P(E_{11}^c\bigcup\cup_{(i,j)\neq(1,1)}E_{i,j}), \label{eq:error2}
\end{align}
Following a similar analysis as  in \cite{cover2006elements}, one can show that the probability of error at receiver~$1$, i.~e., $P_{e1}^{(n)}$ tends to zero if the following conditions are satisfied
\begin{align}
R_1&\leq I(X_1;Y_1|X_{2C}),\nonumber\\
R_{2C}&\leq I(X_{2C};Y_1|X_1),\nonumber\\
R_1+R_{2C}&\leq I(X_1,X_{2C};Y_1), \label{eq:error3}
\end{align}
The decoding at receiver~$2$ happens at two stages as discussed below:
\begin{enumerate} 
\item The decoder decides in favour of $(\hat{W}_{2C})$ has been sent if $(X_{2C}^n(\hat{W}_{2C}), Y_2^n)$ is jointly typical; otherwise an error is declared.
\item If such  $(\hat{W}_{2C})$ is found, the decoder needs to determine an $\hat{l}$, such that $(X_{2p}^n(\hat{l}), Y_2^n)$ is jointly typical given the knowledge of $X_{2C}^n(\hat{W}_{2C})$, where $\hat{l} \in [1:2^{n(R_{2p} + R_{2p}' + R_{2p}^{\oplus}}]$. If no such unique $\hat{l}$ is found, then an error is declared. The decoder decides in favour of $\hat{W}_{2p}$ the index of the bin $X_{2p}^n(\hat{l})$ belongs to. To determine $W_{2p}^{\oplus}$, the decoder needs to determine the index of $X_{2p}^n(\hat{l})$ to which section the codeword belongs $(\hat{l'})$. As the receiver has access to the key, it can determine $\hat{W}_{2p}^{\oplus} = \hat{l'} \oplus R_K^{\oplus}$, where $\oplus $ is the modulo addition over $[1: 2^{(1 - \eta) R_K}]$. 
\end{enumerate}
For driving the error probability to zero in decoding the common message at receiver~$2$, following condition is required to be satisfied.
\begin{align}
	R_{2C} \leq I(X_{2C}; Y_2) \label{eq:ach1}
\end{align}
Due to the secrecy constraint, the common message also needs to satisfy the following condition
\begin{align}
	R_{2C} \leq \eta R_K. \label{eq:ach2}
\end{align}
After decoding the common message, receiver~$2$ attempts to decode the private message $W_{2P}$ given the knowledge of the common message. To drive the error probability to zero, following condition is required to be satisfied
\begin{align}
	R_{2P} + R_{2P}' \leq I(X_{2P}; Y_2|X_{2C}), \label{eq:ach3}
\end{align}
For decoding the message $W_{2P}^{\oplus}$, following conditions are required to be satisfied
\begin{align}
& R_{2P}^{\oplus} \leq (1 - \eta) R_K, \nonumber \\
& R_{2P} + R_{2P}^{\oplus} \leq I(X_{2P}; Y_2|X_{2C}). \label{eq:ach4}
\end{align}
Using the relation $R_2 = R_{2C} + {R_{2P}^W} = R_{2C} + R_{2P} + R_{2P}^\oplus$ and Fourier-Motzkin Elimination \cite{el2011network}, one can obtain the achievable rate region stated in the theorem. In the following, the amount of rate that needs to be sacrificed $( R_{2P}')$ to confuse the unintended receiver is determined for guaranteeing secrecy of the private message of transmitter $2$.\\
 \textbf{\emph{Secrecy analysis:}}\\
 Consider the following
 \begin{align}
     I(W_2;Y_1^n)&\stackrel{(a)} = I(W_{2C}, W_{2P}, W_{2P}^\oplus; Y_1^n),\nonumber\\
     &=I(W_{2P} ; Y_1^n) + I(W_{2P}^\oplus; Y_1^n|W_{2P})\nonumber\\
     & \qquad + I(W_{2C}; Y_1^n|W_{2P},{W_{2P}^\oplus}),\nonumber\\
     &\stackrel{(b)} = I(W_{2P}; Y_1^n), \nonumber\\
     &\stackrel{(c)} \leq H(W_{2P}) - H(W_{2P}|Y_1^n, X_1^n),\nonumber\\
     &= I(W_{2P}; Y_1^n|X_1^n),\nonumber\\
     &\stackrel{(d)} = I(W_{2P}; S_1^n), \label{eq:sec1}
 \end{align}
 where (a) is due to the splitting of the message of transmitter~$2$ into three parts; (b) the message $W_{2C}$ is independent of the output due to the one-time padding scheme and from the code construction it can be seen that the codeword corresponding to the common confidential message is independent of the private message. It can also be observed that $W_{2P}^{\oplus}$ is independent of $Y_1^n$ due to the non-availability of the key at receiver~$1$ and one-time padding; (c) is obtained from the definition of mutual information and the fact that conditioning cannot increases the entropy and (d) is obtained using the independence between $W_{2P}$ and $X_1^n$, where $S_1^n  \triangleq h_{21}X_{2P}^n + Z_1^n$. From the above, it is observed that transmitter~$2$ forms a hypothetical Gaussian wiretap channel with receiver $2$ (as legitimate receiver) and receiver $1$ (as eavesdropper). Using the approach in~\cite{bloch2013strong}, one can ensure that $I(W_{2P} ; S_1^n) \rightarrow 0$ as $n\rightarrow \infty$ if the condition in (\ref{equivocation computation}) holds true. 
\end{proof}
\textit{Remark:} The above result holds for the scenario when the channel strength of the interfering link is weaker than the direct link, i.~e., $h_{22} \leq h_{21}$. For guaranteeing security of the private message  using wiretap coding, this condition is required to be satisfied. However, when transmitter~$1$ sends artificial noise drawn from Gaussian distribution along with its message, it is possible to send private message using stochastic encoding. The achievable result when transmitter~$1$ sends artificial noise along with its message is stated in the following corollary.

\begin{corollary} \label{corollary-key splitting}
The following rate region is achievable with artificial noise (AN) transmission by transmitter $1$ using the result in Theorem~\ref{theorem-key splitting}:
\begin{align}
R_s=conv  \bigcup_{0\leq (\lambda_1,\lambda_2,\beta_1,\beta_2,\eta)\leq 1} R_s^{ZIC}(\lambda_1,\lambda_2,\beta_1,\beta_2,\eta),
\end{align}
where conv represents the convex hull of the rate region, and
\begin{align}
&R_s^{ZIC}(\lambda_1,\lambda_2,\beta_1,\beta_2,\eta)=\{(R_1,R_2):R_1\geq 0, R_2\geq 0 ,\nonumber\\
&R_1\leq 0.5 \log \biggl(1+\frac{h_{11}^2P_{1M}} {1+h_{11}^2P_{1A}+h_{21}^2P_{2P}}\biggr),\nonumber\\ 
&R_2\leq \min\biggl\{0.5\log \biggl(1+\frac{h_{21}^2P_{2C}} {1+h_{11}^2P_{1A}+h_{21}^2P_{2P}}\biggr),\nonumber\\
&\qquad0.5\log\biggl(1+\frac{h_{22}^2P_{2C}}{1+h_{22}^2P_{2P}}\biggr),\eta R_K\biggr\}\nonumber\\
&\qquad+\min\{0.5\log (1+ h_{22}^2P_{2P}),0.5\log (1+h_{22}^2P_{2P})\nonumber\\
&\qquad-R_{2P}'+(1-\eta)R_K\},\nonumber \\
&R_1+R_2\leq 0.5\log\biggl (1+\frac{h_{11}^2 P_{1M}+h_{21}^2P_{2C}}{1+h_{11}^2P_{1A}+h_{21}^2P_{2P}}\biggr)\nonumber\\
&\qquad\qquad+\min\{0.5\log(1+h_{22}^2P_{2P}),0.5\log(1+h_{22}^2P_{2P})\nonumber\\
&\qquad\qquad-R_{2P}'+(1-\eta)R_K\}, 
\end{align}
where $P_{1M}=\lambda_1\beta_1P_1$, $P_{1A}=(1-\lambda_1)\beta_1P_1$, $P_{2P}=\lambda_2 \beta_2 P_2$,  $P_{2C}=(1-\lambda_2)\beta_2P_2$ and $R_{2P}'=0.5\log (1+\frac{h_{21}^2P_{2P}}{1+h_{11}^2P_{1A}})$.
\end{corollary}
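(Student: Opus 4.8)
The plan is to specialize the general region of Theorem~\ref{theorem-key splitting} to jointly Gaussian inputs and to augment transmitter~$1$'s signal with artificial noise, then evaluate every mutual information term in closed form. Concretely, I would let transmitter~$1$ transmit $X_1 = X_{1M} + X_{1A}$, where $X_{1M}\sim\mathcal{N}(0,P_{1M})$ carries $W_1$ and $X_{1A}\sim\mathcal{N}(0,P_{1A})$ is independent Gaussian artificial noise, with $P_{1M}=\lambda_1\beta_1 P_1$ and $P_{1A}=(1-\lambda_1)\beta_1 P_1$, so that the total used power $\beta_1 P_1 \le P_1$ respects the constraint. Transmitter~$2$ would use the superposition $X_2 = X_{2C}+X_{2P}$ with independent $X_{2C}\sim\mathcal{N}(0,P_{2C})$, $X_{2P}\sim\mathcal{N}(0,P_{2P})$, and $P_{2C}=(1-\lambda_2)\beta_2 P_2$, $P_{2P}=\lambda_2\beta_2 P_2$. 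All four components are mutually independent, which is precisely what the superposition and rate-splitting structure of the scheme in Theorem~\ref{theorem-key splitting} assumes; the key-rate constraints $\eta R_K$ and $(1-\eta)R_K$ are rate quantities and carry over unchanged.

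The core of the argument is then to substitute these Gaussian inputs into the mutual information terms of~(\ref{achievable secrecy rate region for key splitting}) and~(\ref{equivocation computation}), evaluating each as a scalar Gaussian capacity by identifying, at each receiver, which signal is decoded and which remaining signals act as effective noise. The crucial structural fact is that the Z-IC has no link from transmitter~$1$ to receiver~$2$, so the artificial noise $X_{1A}$ never appears at receiver~$2$ and only inflates the noise floor at receiver~$1$. For $R_1 \le I(X_1;Y_1|X_{2C})$ the message $X_{1M}$ is decoded while $X_{1A}$ and $h_{21}X_{2P}$ are treated as noise, giving the stated bound. For the common-message terms, I would condition only on the decodable part $X_{1M}$ (not on the artificial noise, which receiver~$1$ cannot strip), yielding $I(X_{2C};Y_1|X_1)=0.5\log(1+h_{21}^2P_{2C}/(1+h_{11}^2P_{1A}+h_{21}^2P_{2P}))$ at receiver~$1$ and $I(X_{2C};Y_2)=0.5\log(1+h_{22}^2P_{2C}/(1+h_{22}^2P_{2P}))$ at receiver~$2$; the private-message term becomes $I(X_{2P};Y_2|X_{2C})=0.5\log(1+h_{22}^2P_{2P})$, and the sum-rate term $I(X_1,X_{2C};Y_1)$ follows analogously with $X_{1M}$ and $X_{2C}$ decoded jointly against the noise $1+h_{11}^2P_{1A}+h_{21}^2P_{2P}$.

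The decisive step is the equivocation term $R_{2P}'\ge I(X_{2P};Y_1|X_1,X_{2C})$: conditioning on $X_{1M}$ and $X_{2C}$ while leaving $X_{1A}$ as noise at receiver~$1$ gives $R_{2P}'=0.5\log(1+h_{21}^2P_{2P}/(1+h_{11}^2P_{1A}))$, which is exactly where the artificial noise pays off, since increasing $P_{1A}$ drives $R_{2P}'$ down and enlarges the private contribution $I(X_{2P};Y_2|X_{2C})-R_{2P}'$ to $R_2$. Here I would emphasize the point flagged in the remark preceding the corollary: Theorem~\ref{theorem-key splitting} secures the private message through wiretap coding only when the legitimate link dominates the eavesdropper link, whereas the artificial noise from transmitter~$1$ artificially degrades receiver~$1$ and restores a positive secrecy rate when that ordering fails. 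I expect the main obstacle to be bookkeeping discipline rather than a new idea: one must consistently treat $X_{1A}$ as undecodable noise (so it sits in every denominator at receiver~$1$ but is absent at receiver~$2$) while treating $X_{1M}$ and $X_{2C}$ as decoded or conditioned signals, and verify that this Gaussian choice is admissible within the stochastic-encoding and Fourier--Motzkin-reduced region of Theorem~\ref{theorem-key splitting}. Finally, I would take the union of $R_s^{ZIC}(\lambda_1,\lambda_2,\beta_1,\beta_2,\eta)$ over all admissible parameters in $[0,1]$ and close it under convex hull to capture time sharing, yielding the claimed region $R_s$.
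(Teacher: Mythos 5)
Your proposal is correct and follows the route the paper intends: the corollary is obtained by instantiating Theorem~\ref{theorem-key splitting} with independent Gaussian inputs, superimposing Gaussian artificial noise on transmitter~$1$'s codeword so that $X_{1A}$ enters every noise denominator at receiver~$1$ but never reaches receiver~$2$, and evaluating each mutual information term (including $R_{2P}'=I(X_{2P};Y_1|X_{1M},X_{2C})$) in closed form before taking the union over $(\lambda_1,\lambda_2,\beta_1,\beta_2,\eta)$ and the convex hull. Your identification of the artificial noise as the mechanism that restores a positive wiretap-coding rate when the interfering link is not weaker than the direct link matches the remark preceding the corollary, and all of your closed-form evaluations agree with the stated region.
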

\textit{Remarks}: 
\begin{itemize}
\item Transmitter~$1$ sends its message along with artificial noise which is drawn from the Gaussian distribution $\mathcal{N}(0, P_{1A})$.  In corollary~\ref{corollary-key splitting}, the parameters $\beta_i$ and $\lambda_i $ $(i=1,2)$ are the power control  and power splitting parameters of transmitter $1$ and $2$, respectively. 
\item The term $\eta$ is the key rate splitting parameter. Depending on whether $\eta$ is $1$ or $0$, the achievable scheme correspond to two distinct schemes. When $\eta = 1$, the entire key rate is used for the common confidential message. On the other hand, when $\eta = 0$, the entire key is used as a part of the wiretap coding and in this case, no common confidential message is sent by transmitter~$2$. The results corresponding to these two cases are stated in the following corollaries.
\end{itemize}
\begin{corollary} \label{corollary-rate splitting approach}
When $\eta = 1$, the achievable rate region stated in Corollary~\ref{corollary-key splitting} simplifies to the following
\begin{align}
   R_s=conv  \bigcup_{0\leq (\lambda_1,\lambda_2,\beta_1,\beta_2)\leq 1,\eta=1} R_s^{ZIC}(\lambda_1,\lambda_2,\beta_1,\beta_2,\eta),
\end{align}
where conv represents the convex hull of the rate region, and
\begin{align}
&R_s^{ZIC}(\lambda_1,\lambda_2,\beta_1,\beta_2,\eta)=\{(R_1,R_2):R_1\geq 0, R_2\geq 0 ,\nonumber\\
&R_1\leq 0.5 \log \biggl(1+\frac{h_{11}^2P_{1M}} {1+h_{11}^2P_{1A}+h_{21}^2P_{2P}}\biggr),\nonumber\\ 
&R_2\leq \min\biggl\{0.5\log \biggl(1+\frac{h_{21}^2P_{2C}} {1+h_{11}^2P_{1A}+h_{21}^2P_{2P}}\biggr),\nonumber\\
&\qquad0.5\log\biggl(1+\frac{h_{22}^2P_{2C}}{1+h_{22}^2P_{2P}}\biggr), R_K\biggr\}\nonumber\\
&\qquad+\min\{0.5\log (1+ h_{22}^2P_{2P}),0.5\log (1+h_{22}^2P_{2P})\nonumber\\
&\qquad-R_{2P}'\},\nonumber \\
&R_1+R_2\leq 0.5\log\biggl (1+\frac{h_{11}^2 P_{1M}+h_{21}^2P_{2C}}{1+h_{11}^2P_{1A}+h_{21}^2P_{2P}}\biggr)\nonumber\\
&\qquad\qquad+\min\{0.5\log(1+h_{22}^2P_{2P}),0.5\log(1+h_{22}^2P_{2P})\nonumber\\
&\qquad\qquad-R_{2P}'\},  
\end{align}

\textit{Remark:} In this case, the transmitter~$2$ sends common confidential message and private message ($W_{2P}$) and the entire key rate is utilized for sending the common confidential message. 
\end{corollary}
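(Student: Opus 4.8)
The plan is to obtain Corollary~\ref{corollary-rate splitting approach} as a direct specialization of Corollary~\ref{corollary-key splitting}, rather than as an independent derivation. The rate region in Corollary~\ref{corollary-key splitting} is expressed as the convex hull of a union of the regions $R_s^{ZIC}(\lambda_1,\lambda_2,\beta_1,\beta_2,\eta)$ taken over the key-splitting parameter $\eta\in[0,1]$ together with the power-control and power-splitting parameters. Restricting this union to the single value $\eta=1$ therefore produces an achievable subregion, so it suffices to substitute $\eta=1$ into the expression for $R_s^{ZIC}$ and simplify. No new error-probability or equivocation analysis is needed, since $\eta=1$ merely selects a valid parameter point already covered by Theorem~\ref{theorem-key splitting}.

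First I would locate the only two places where $\eta$ appears explicitly in the three rate bounds: the term $\eta R_K$ inside the first minimum of the $R_2$ bound, and the term $(1-\eta)R_K$ inside the second minimum of both the $R_2$ bound and the $R_1+R_2$ bound. Setting $\eta=1$ collapses these to $\eta R_K=R_K$ and $(1-\eta)R_K=0$, respectively. Substituting these values leaves every logarithmic term unchanged, since those depend only on the power allocation through $P_{1M},P_{1A},P_{2C},P_{2P}$ and on $R_{2P}'$, and yields exactly the three bounds asserted in Corollary~\ref{corollary-rate splitting approach} (in particular the second minimum reduces from $0.5\log(1+h_{22}^2P_{2P})-R_{2P}'+(1-\eta)R_K$ to $0.5\log(1+h_{22}^2P_{2P})-R_{2P}'$).

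Next I would verify that this specialization is consistent with the encoding scheme of Theorem~\ref{theorem-key splitting}. With $\eta=1$ the decoding constraint $R_{2P}^\oplus\leq(1-\eta)R_K=0$ forces $R_{2P}^\oplus=0$, so transmitter~$2$ transmits nothing through the key-assisted wiretap-coding layer; it sends only the common confidential message $W_{2C}$, now secured by one-time padding with the \emph{entire} key of rate $R_K$ (consistent with the bound $R_{2C}\leq \eta R_K=R_K$), and the private message $W_{2P}$, protected by stochastic encoding with rate overhead $R_{2P}'$. This reproduces the interpretation stated in the remark and confirms that the simplified region is genuinely achievable.

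I do not expect a substantive obstacle here, as the result is a pure specialization; the only care required is notational bookkeeping, namely tracking every $\eta$-dependent term and confirming that taking the convex hull of the union over the remaining parameters $(\lambda_1,\lambda_2,\beta_1,\beta_2)$ preserves achievability. The latter holds because convex combinations of achievable rate pairs are themselves achievable by standard time-sharing, so the convex-hull operation introduces no additional constraints beyond those inherited from Corollary~\ref{corollary-key splitting} at $\eta=1$.
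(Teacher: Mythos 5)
Your proposal is correct and matches the paper's (implicit) argument exactly: the corollary is obtained by substituting $\eta=1$ into the region of Corollary~\ref{corollary-key splitting}, which collapses $\eta R_K$ to $R_K$ and $(1-\eta)R_K$ to $0$ while leaving all logarithmic terms and the convex-hull/union structure intact. Your additional check that $R_{2P}^\oplus=0$ under this choice, so that only the one-time-padded common confidential message and the stochastically encoded private message remain, is consistent with the paper's remark.
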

\begin{corollary}\label{corollary-key is used as a part of wiretap coding}
When $\eta = 0$ and $\lambda = 1$, the achievable rate region stated in Corollary~\ref{corollary-key splitting} simplifies to the following
\begin{align}
    R_s=conv  \bigcup_{0\leq (\beta_1,\beta_2)\leq 1} R_s^{ZIC}(\beta_1,\beta_2),
\end{align}
where conv represents the convex hull of the rate region, and
\begin{align}
&R_s^{ZIC}(\beta_1,\beta_2)=\{(R_1,R_2):R_1\geq 0, R_2\geq 0 ,\nonumber\\
&R_1\leq 0.5 \log \biggl(1+\frac{h_{11}^2P_1}{1+h_{21}^2P_2}\biggr),\nonumber\\ 
&R_2\leq\min\{0.5\log (1+ h_{22}^2P_2),0.5\log (1+h_{22}^2P_2)\nonumber\\
&\qquad-0.5\log (1+h_{21}^2P_2) + R_K\},
\end{align}
\end{corollary}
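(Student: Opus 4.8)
The plan is to obtain this corollary as a pure specialization of Corollary~\ref{corollary-key splitting}, so that no new achievability or secrecy argument is needed: everything follows by substituting the extreme parameter choices and simplifying. First I would fix the parameters. Setting $\eta=0$ routes the whole key through the wiretap coding and, by the remark following Corollary~\ref{corollary-key splitting}, sends no common confidential message, so $P_{2C}=(1-\lambda_2)\beta_2 P_2=0$ (equivalently $\lambda_2=1$); reading $\lambda=1$ as $\lambda_1=1$ turns off transmitter~$1$'s artificial noise, giving $P_{1A}=(1-\lambda_1)\beta_1 P_1=0$ and $P_{1M}=\beta_1 P_1$. In addition $\eta R_K=0$, $(1-\eta)R_K=R_K$, and $P_{2P}=\beta_2 P_2$.

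Next I would show the common-confidential-message contribution disappears from the $R_2$ bound. Its first minimum is over $0.5\log(1+h_{21}^2 P_{2C}/(1+h_{11}^2 P_{1A}+h_{21}^2 P_{2P}))$, $0.5\log(1+h_{22}^2 P_{2C}/(1+h_{22}^2 P_{2P}))$ and $\eta R_K$; substituting $P_{2C}=0$ sends the first two arguments to $0.5\log 1=0$ and $\eta=0$ sends the third to $0$, so this minimum vanishes. Only the private/wiretap minimum survives, and substituting $(1-\eta)R_K=R_K$ together with $R_{2P}'=0.5\log(1+h_{21}^2 P_{2P}/(1+h_{11}^2 P_{1A}))=0.5\log(1+h_{21}^2 P_{2P})$ (using $P_{1A}=0$) reproduces $\min\{0.5\log(1+h_{22}^2 P_{2P}),\,0.5\log(1+h_{22}^2 P_{2P})-0.5\log(1+h_{21}^2 P_{2P})+R_K\}$. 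The $R_1$ bound similarly collapses to $0.5\log(1+h_{11}^2 P_{1M}/(1+h_{21}^2 P_{2P}))$ once $P_{1A}=0$.

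I would then verify that the sum-rate constraint is redundant and can be dropped. After the same substitutions its leading term $0.5\log(1+(h_{11}^2 P_{1M}+h_{21}^2 P_{2C})/(1+h_{11}^2 P_{1A}+h_{21}^2 P_{2P}))$ collapses to $0.5\log(1+h_{11}^2 P_{1M}/(1+h_{21}^2 P_{2P}))$, which is exactly the right-hand side $A$ of the $R_1$ bound, while its trailing minimum equals the surviving right-hand side $B$ of the $R_2$ bound. Hence the sum-rate inequality becomes $R_1+R_2\le A+B$, which is already implied by $R_1\le A$ and $R_2\le B$; this is why it is absent from the statement. Taking finally $\beta_1=\beta_2=1$ replaces $P_{1M}$ by $P_1$ and $P_{2P}$ by $P_2$, yielding the displayed region.

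The one point I would treat with care, and the only genuine subtlety, is the union over $(\beta_1,\beta_2)$: in the displayed $R_s^{ZIC}(\beta_1,\beta_2)$ the bounds are written with $P_1,P_2$ rather than $\beta_1 P_1,\beta_2 P_2$, so I would justify that with neither artificial noise nor a common message the full-power point is the relevant one, rendering the $\beta$-union inconsequential (equivalently, retain the $\beta_i$ inside the logarithms and observe that the convex hull is unchanged). Everything else is a mechanical substitution into the already-established Corollary~\ref{corollary-key splitting}.
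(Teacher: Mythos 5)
Your proof is correct and follows the paper's (implicit) route exactly: the corollary is obtained by direct substitution of $\eta=0$ and $\lambda_1=\lambda_2=1$ into Corollary~\ref{corollary-key splitting}, and your verification that the first minimum in the $R_2$ bound vanishes and that the sum-rate constraint becomes redundant ($R_1+R_2\le A+B$ being implied by $R_1\le A$, $R_2\le B$) supplies the two steps the paper leaves unstated. Your reading of the vestigial union over $(\beta_1,\beta_2)$ against the full-power expressions inside the logarithms is also the right way to reconcile what is most plausibly a typographical slip in the statement.
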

\textit{Remarks:}
\begin{itemize}
	\item In this case, the entire key rate is used as a part of the wiretap coding and this can help to reduce the loss in rate due to stochastic encoding. In this case, receiver~$1$ decodes its intended message by treating the message of transmitter~$2$ as noise. 
	\item By setting $P_1 =0$, one can obtain the result for the wiretap channel with secret key \cite{ardestanizadeh2009wiretap}. 
\end{itemize}
 \subsection{Key is  used as One Time Pad}
 In this case, the key is used to protect the message of the transmitter~$2$ and receiver~$1$ decodes its message by treating interference as noise. This scheme has less decoding complexity in comparison to other schemes. It can be considered as the secure version of treating interference as noise scheme. The result is stated in the following theorem.
 \begin{theorem}\label{theorem-key as a one time pad}
  The achievable rate region of the 2-user Gaussian ZIC when the key is used as a one-time pad is given by
  \begin{align}
R_s=conv  \bigcup_{0\leq (\beta_1,\beta_2)\leq 1} R_s^{ZIC}(\beta_1,\beta_2),
\end{align}
where  conv represents the convex hull of the rate region, and 
\begin{align}
&R_s^{ZIC}(\beta_1,\beta_2)=\{(R_1,R_2):R_1\geq 0, R_2\geq 0 ,\nonumber\\
&R_1\leq 0.5 \log \biggl(1+\frac{h_{11}^2\beta_1P_1} {1+h_{21}^2\beta_2P_2}\biggr),\nonumber\\
&R_2\leq \min\{R_K,0.5 \log (1+h_{22}^2\beta_2 P_2)\}\}.\label{key as a one time pad}
\end{align}
 \end{theorem}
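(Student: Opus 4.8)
The plan is to prove achievability directly by combining a one-time pad on the message of transmitter~$2$ with a treat-interference-as-noise decoding rule at receiver~$1$. This is a substantial simplification of the scheme underlying Theorem~\ref{theorem-key splitting}, since neither superposition coding, stochastic encoding, nor interference decoding is required. First I would fix a power split through $\beta_1,\beta_2\in[0,1]$ and generate two independent point-to-point Gaussian codebooks: $2^{nR_1}$ codewords drawn i.i.d.\ from $\mathcal{N}(0,\beta_1 P_1)$ at transmitter~$1$ and $2^{nR_2}$ codewords drawn i.i.d.\ from $\mathcal{N}(0,\beta_2 P_2)$ at transmitter~$2$, each respecting the average power constraint $E[X_i^2]\le P_i$. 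For the encoding step, transmitter~$1$ transmits $X_1^n(W_1)$, while transmitter~$2$ first encrypts its message as $W_2'=W_2\oplus K$ (modulo addition over the message alphabet, using $nR_2$ of the available $nR_K$ key bits) and then transmits the codeword $X_2^n(W_2')$. The encryption is well defined only when $R_2\le R_K$, which yields the first term in the minimum bounding $R_2$.

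Next I would carry out the error probability analysis at both receivers. Receiver~$1$ decodes only $W_1$, treating the interfering codeword $X_2^n$ as additional noise; since the interference is Gaussian with power $h_{21}^2\beta_2 P_2$, a standard point-to-point random coding argument gives reliable decoding provided $R_1\le 0.5\log\bigl(1+\tfrac{h_{11}^2\beta_1 P_1}{1+h_{21}^2\beta_2 P_2}\bigr)$. Receiver~$2$ decodes the ciphertext $W_2'$ from the interference-free channel $Y_2=h_{22}X_2+Z_2$, which succeeds if $R_2\le 0.5\log(1+h_{22}^2\beta_2 P_2)$, and then recovers $W_2=W_2'\oplus K$ using its copy of the key. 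Together these give the bound on $R_1$ and the second term in the minimum bounding $R_2$.

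The secrecy step is where the one-time pad does the work, and it is the part I would handle most carefully. Because $K$ is uniform and independent of $W_2$ and $R_K\ge R_2$, the ciphertext $W_2'$ is uniformly distributed and statistically independent of $W_2$. Since $X_2^n$ is a deterministic function of $W_2'$, and $X_1^n$ depends only on the independent message $W_1$, the output $Y_1^n=h_{11}X_1^n+h_{21}X_2^n+Z_1^n$ is independent of $W_2$, so $I(W_2;Y_1^n)=0$, which gives perfect and hence weak secrecy. Finally, taking the union over $(\beta_1,\beta_2)$ and the convex hull through a standard time-sharing argument yields the stated region $R_s$.

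I do not expect a serious obstacle, as this scheme trades the intricate equivocation computation of Theorem~\ref{theorem-key splitting} for a clean, exact secrecy guarantee. The only delicate points are aligning the key and message alphabets so that the modular one-time pad remains well defined when $R_K>R_2$ (used here by retaining only $nR_2$ key symbols), and confirming that the Gaussian interference can indeed be absorbed into the effective noise at receiver~$1$ with no rate penalty beyond the degraded signal-to-interference-plus-noise ratio already reflected in the $R_1$ bound.
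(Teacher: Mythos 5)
Your proposal is correct and follows essentially the same route as the paper's proof: one-time pad encryption $W_2'=W_2\oplus K$ at transmitter~$2$, independent Gaussian point-to-point codebooks, treating interference as noise at receiver~$1$, and secrecy obtained from the independence of $W_2$ and $X_2^n$ induced by the pad. You simply spell out the details (the $R_2\le R_K$ requirement for the pad, the explicit SINR calculation, and the exact-independence secrecy argument) more fully than the paper does.
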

\begin{proof}\label{proof-achievable rate region of one time pad}
In this case, the secret key $K$ is used to encrypt the message of transmitter~$2$, i.~e., $W_{2}' = W_{2} \oplus K$, where $W_{2}'$ is the encrypted message. Then, the encrypted  message is encoded using the codebook  $\mathcal{C}_{2}$ which contains $2^{nR_2}$ i.i.d. sequences of length $n$. Each entries of the codebook are drawn randomly from $\mathcal{N}(0, \beta_2 P_2)$. In a similar way, the codebook for transmitter~$1$ is generated from $\mathcal{N}(0, \beta_1 P_1)$. Receiver~$1$ decodes its intended message using treating interference as noise. The rate of the message at transmitter~$2$ is limited by the key rate and its reliability constraint at receiver~$2$. As the message $W_2$ is independent of $X_2$ due to one-time pad, the message $W_2$ is independent of the output at receiver~$1$. Hence, the message of the transmitter~$2$ remains secure. As the receiver~$2$ knows the key, it can decode the message $W_2$ provided it can decode the codeword $X_2^n$.
\end{proof}
\section{Outer Bounds}\label{sec:OB}
In this section, outer bounds on the secrecy capacity region of the 2-user Gaussian Z-IC with shared key are presented. The result in Theorem~\ref{Outer bound for weak/moderate interference regime} provides an outer bound on the sum rate and this outer bound is applicable when the strength of interference is less than that of the signal strength, i.~e., $\text{INR} \leq \text{SNR}$. Derivation of the outer bound needs to take account of the secrecy constraint at receiver~$1$ and shared key between transmitter~$2$ and receiver~$2$.  The result stated in Theorem~\ref{Outer bound for high interference regime} provides an outer bound on the rate of user~$2$. The main novelty of the proofs lies in the choice of side-information provided to the receiver and using the secrecy constraint at the receiver~$1$. Using the above bounds along with the conventional bound on the capacity of the point-to-point channel, one can obtain bound on the secrecy capacity region. The outer bound on the sum rate for the weak/moderate interference regime is stated below.
\begin{theorem}\label{Outer bound for weak/moderate interference regime}
The sum rate of the $2$-user Gaussian Z-interference channel with secret key of rate $R_K$ is upper bound by 
\begin{align}
  R_{sum} & \leq\log(1+SNR) - 0.5\log(1 + INR) + R_K, 
  \end{align}
where $SNR=h_d^2P$, $INR=h_c^2P$ and $ SNR > INR$. 
\end{theorem}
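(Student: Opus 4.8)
The plan is to combine Fano's inequality at both receivers with the weak-secrecy constraint, reducing the problem to a point-to-point bound for user~$1$ plus a scalar Gaussian wiretap bound for user~$2$. Writing $SNR=h_{11}^2P_1=h_{22}^2P_2$ and $INR=h_{21}^2P_2$ in the symmetric notation, I would start from the reliability requirements $nR_1\le I(W_1;Y_1^n)+n\epsilon_n$ and, since receiver~$2$ decodes using the key, $nR_2\le I(W_2;Y_2^n,K)+n\epsilon_n=I(W_2;Y_2^n\mid K)+n\epsilon_n$, the last equality using $W_2\perp K$.

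For user~$1$ I would condition on $X_2^n$: since $X_1^n$ is a function of $W_1$ and $W_1\perp(W_2,K)$, one gets $I(W_1;Y_1^n)\le I(X_1^n;Y_1^n\mid X_2^n)=h(h_{11}X_1^n+Z_1^n)-h(Z_1^n)\le \tfrac n2\log(1+SNR)$, i.e.\ the interference-free single-link bound. The key is stripped from user~$2$'s term through $I(W_2;Y_2^n\mid K)-I(W_2;Y_2^n)=I(W_2;K\mid Y_2^n)\le H(K)\le nR_K$, so that $nR_2\le I(W_2;Y_2^n)+nR_K+n\epsilon_n$.

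The crux is turning the secrecy constraint into the $-0.5\log(1+INR)$ penalty. I would hand receiver~$1$ the genie $X_1^n$, equivalently work with the clean interference observation $S_1^n\triangleq h_{21}X_2^n+Z_1^n$ (analogous to the quantity $S_1^n$ in the achievability proof). The essential point is that supplying $X_1^n$ is "free'': because $W_1$ is decodable at receiver~$1$ and $X_1^n$ is a function of $W_1$, Fano gives $I(W_2;X_1^n\mid Y_1^n)\le H(X_1^n\mid Y_1^n)\le n\epsilon_n$, whence
\begin{align}
I(W_2;S_1^n) &= I(W_2;Y_1^n\mid X_1^n)=I(W_2;Y_1^n)+I(W_2;X_1^n\mid Y_1^n)\nonumber\\
&\le 2n\epsilon_n,\nonumber
\end{align}
using $I(W_2;Y_1^n)\le n\epsilon_n$ and the independence/invertibility facts $I(W_2;S_1^n)=I(W_2;Y_1^n\mid X_1^n)$. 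Hence the leakage into the noise-free interference link still vanishes. Writing $I(W_2;Y_2^n)=\big[I(W_2;Y_2^n)-I(W_2;S_1^n)\big]+I(W_2;S_1^n)$, it remains to bound the bracket. Under $SNR>INR$ the link $X_2\to S_1$ is a stochastically degraded version of $X_2\to Y_2$ (add independent noise of variance $1-h_{21}^2/h_{22}^2\ge0$ to a scaled $Y_2$), so the difference is a degraded Gaussian wiretap secrecy quantity; by the entropy-power/Gaussian-extremal argument it single-letterizes to $I(W_2;Y_2^n)-I(W_2;S_1^n)\le \tfrac n2\log(1+SNR)-\tfrac n2\log(1+INR)$, the optimum being Gaussian input at full power (the per-letter secure rate $\tfrac12\log\frac{1+SNR}{1+INR}$ is increasing in power when $SNR>INR$).

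Assembling the pieces gives $nR_2\le \tfrac n2\log(1+SNR)-\tfrac n2\log(1+INR)+nR_K+O(n\epsilon_n)$, and adding $nR_1\le \tfrac n2\log(1+SNR)+n\epsilon_n$ yields, after dividing by $n$ and letting $n\to\infty$, exactly $R_{sum}\le\log(1+SNR)-0.5\log(1+INR)+R_K$. I expect the main obstacle to be the secrecy step: one must argue carefully that the genie $X_1^n$ does not strengthen the eavesdropper (so that the clean-link leakage $I(W_2;S_1^n)$ remains $O(\epsilon_n)$), and then invoke the degraded Gaussian wiretap single-letterization, which is precisely where the hypothesis $SNR>INR$ is genuinely needed.
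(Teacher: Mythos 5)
Your decomposition into an interference-free bound for user~$1$ plus a degraded Gaussian wiretap bound (plus $R_K$) for user~$2$ arrives at the right numbers, but the step you yourself flag as the crux --- that the genie $X_1^n$ is ``free,'' so that $I(W_2;S_1^n)\le 2n\epsilon_n$ --- does not hold. Your justification $I(W_2;X_1^n\mid Y_1^n)\le H(X_1^n\mid Y_1^n)\le n\epsilon_n$ requires $X_1^n$ to be a deterministic function of $W_1$, so that Fano at receiver~$1$ pins it down. In a converse for a secrecy problem you cannot restrict transmitter~$1$ to deterministic encoders: the paper's own achievable scheme (Corollary~\ref{corollary-key splitting}) has transmitter~$1$ injecting Gaussian artificial noise, in which case the randomization in $X_1^n$ given $W_1$ is of order $n$ and the clean-link leakage $I(W_2;S_1^n)$ need not vanish. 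The gap is not cosmetic: the intermediate single-user conclusion your argument would deliver, $R_2\le \tfrac{1}{2}\log(1+SNR)-\tfrac{1}{2}\log(1+INR)+R_K$, is contradicted by that very scheme --- taking $\lambda_1\to 0$ with $P_1$ large makes $R_{2P}'\to 0$, so user~$2$ achieves essentially $\tfrac{1}{2}\log(1+SNR)$, exceeding your bound whenever $R_K<\tfrac{1}{2}\log(1+INR)$. Only the sum rate survives, because the price of that artificial noise is $R_1\to 0$.

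The paper's proof is built precisely to capture this trade-off and never claims the leakage to the clean interference observation is small (its $S_2^n=h_cX_2^n+Z_1^n$ is your $S_1^n$). It writes the secrecy term as $-I(W_2;Y_1^n)=-I(W_2;Y_1^n,S_2^n)+I(W_2;S_2^n\mid Y_1^n)$, uses $-I(W_2;Y_1^n,S_2^n)\le -I(W_2;S_2^n)$ to set up the degraded-wiretap single-letterization for user~$2$ (that part of your argument is fine and matches the paper, including where $SNR>INR$ enters), and then absorbs the residual term $I(W_2;S_2^n\mid Y_1^n)$ into the bound on $I(W_1;Y_1^n)$: a chain of entropy identities using $h(Y_1^n\mid X_1^n)=h(S_2^n)$ and the Markov chain $(W_2,K)\rightarrow X_2^n\rightarrow S_2^n$ collapses $I(W_1;Y_1^n)+I(W_2;S_2^n\mid Y_1^n)$ to $h(h_dX_1^n+Z_1^n)-h(Z_1^n)\le\tfrac{n}{2}\log(1+SNR)$. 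The excess leakage through the clean link is thereby charged against user~$1$'s rate, which is why the result is intrinsically a sum-rate bound. To repair your proof you would need to replace the free-genie claim with this kind of joint accounting, rather than bounding $R_1$ and $R_2$ separately.
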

\begin{proof}
Using Fano's inequality, the sum rate is upper bounded as
\begin{align}
&n(R_1+R_2)\nonumber\\
    &\leq I(W_1;Y_1^n) + I(W_2; Y_2^n, K_2) + n\epsilon_n,\nonumber\\
    &\stackrel{(a)}\leq I(W_1; Y_1^n) + I(W_2; Y_2^n, K_2) - I(W_2;Y_1^n) + n\epsilon_n,\nonumber\\
    &\stackrel{(b)}=I(W_1;Y_1^n) + I(W_2;Y_2^n) + I(W_2;K_2|Y_2^n)\nonumber\\
    &\qquad-I(W_2;Y_1^n,S_2^n) + I(W_2;S_2^n|Y_1^n) + n\epsilon_n, \label{weakOB eq1} 
\end{align}
where (a) is obtained using the secrecy condition at receiver $1$ and (b) is obtained by using the chain rule for mutual information $I(W_2;Y_1^n,S_2^n) = I(W_2;Y_1^n) + I(W_2; S_2^n|;Y_1^n) $, where $S_2^n = h_c X_2^n + Z_1^n$.

Consider the following terms from the above equation
\begin{align}
&I(W_1; Y_1^n) + I(W_2; S_2^n|Y_1^n)\nonumber\\
 &\leq I(W_1,X_1^n;Y_1^n) + I(W_2;S_2^n|Y_1^n),\nonumber\\
 &\stackrel{(a)} = I(X_1^n;Y_1^n) + h(S_2^n|Y_1^n) - h(S_2^n|Y_1^n, W_2), \nonumber\\
    &\stackrel{(b)} = I(X_1^n;Y_1^n) + h(S_2^n,Y_1^n) - h(Y_1^n) \nonumber\\
    &\qquad-h(S_2^n|Y_1^n, W_2), \nonumber\\
    &=I(X_1^n;Y_1^n)+h(S_2^n)+h(Y_1^n|S_2^n)-h(Y_1^n) \nonumber\\
    &\qquad-h(S_2^n|Y_1^n, W_2), \nonumber\\
    &\stackrel{(c)}{=} I(X_1^n;Y_1^n)+h(Y_1^n|X_1^n)+h(Y_1^n|S_2^n)-h(Y_1^n) \nonumber\\
    &\qquad-h(S_2^n|Y_1^n,W_2), \nonumber\\
    & = h(Y_1^n|S_2^n) - h(S_2^n|Y_1^n,W_2), \nonumber\\
    &\leq h(Y_1^n|S_2^n) - h(S_2^n|Y_1^n, W_2, K_2, X_2^n), \nonumber\\
    &\stackrel{(d)}{=} h(Y_1^n|S_2^n) - h(S_2^n|Y_1^n,X_2^n),\nonumber\\
    &= h(Y_1^n|S_2^n)-h(S_2^n, Y_1^n|X_2^n) + h(Y_1^n|X_2^n),\nonumber\\
    &= h(Y_1^n|S_2^n) - h(S_2^n|X_2^n) - h(Y_1^n|X_2^n, S_2^n) + h(Y_1^n|X_2^n),\nonumber\\
    &\stackrel{(e)}{=} h(S_1^n) - h(Z_1^n), \qquad \text{where}\quad S_1^n = h_d X_1^n + Z_1^n, \label{weakOB eq2}
\end{align}
where (a) is obtained using the fact that $W_1 \rightarrow X_1^n \rightarrow Y_1^n$; (b) is obtained from the chain rule of entropy; (c) is obtained using the relation $h(Y_1^n|X_1^n) = h(h_c X_2^n+Z_1^n) = h(S_2^n)$; (d) is obtained using the fact that $(W_2, K_2) \rightarrow X_2^n \rightarrow S_2^n$ and (e) is obtained by observing that the first and third term cancel with each other, where $S_1^n=h_d X_1^n+Z_1^n$. 

Consider the following terms in (\ref{weakOB eq1})
\begin{align}
    &I(W_2;Y_2^n) + I(W_2;K_2|Y_2^n) - I(W_2;Y_1^n, S_2^n),\nonumber\\
    &\leq I(W_2;Y_2^n) - I(W_2;S_2^n) + I(W_2;K_2|Y_2^n), \nonumber\\
    &\stackrel{(a)}\leq H(W_2|S_2^n) - H(W_2|Y_2^n,S_2^n) + I(W_2;K_2|Y_2^n),\nonumber\\
    &= I(W_2;Y_2^n|S_2^n) + I(W_2;K_2|Y_2^n),\nonumber\\
    &\leq I(W_2,K_2;Y_2^n|S_2^n) + I(W_2;K_2|Y_2^n),\nonumber\\
    &\leq I(X_2^n;Y_2^n|S_2^n) + H(K_2),\nonumber\\
    &\stackrel{(b)}\leq n[I(X_2;Y_2) - I(X_2;S_2)] + nR_K,\label{weakOB eq3}
\end{align}
where (a) is obtained from the definition of mutual information and the fact that conditioning can not increase the entropy; (b) is obtained from the degradedness condition of the wiretap channel. The quantities $X_2$, $Y_2$ ans $S_2$ form a hypothetical wiretap channel where $S_2$ is the output received at the eavesdropper.

Finally, using (\ref{weakOB eq2}) and (\ref{weakOB eq3}) in (\ref{weakOB eq1}), the sum rate becomes
\begin{align}
R_{sum}&\leq \log(1+h_d^2P) - \frac{1}{2}\log(1+h_c^2P) + R_K.
\end{align}
\end{proof}
\textit{Remarks}
\begin{enumerate}
\item The outer bound in Theorem \ref{Outer bound for weak/moderate interference regime} can be extended to the asymmetric scenario and stated below. For the following outer bound to hold, it should satisfy the following condition: $SNR_2 > INR_1$.
\begin{align}
    R_2 & \leq 0.5\log(1+SNR_2) - 0.5\log(1+INR_1) + R_K,
\end{align}
where $SNR_1=h_{11}^2P_1$, $SNR_2=h_{22}^2P_2$ and $INR_1=h_{21}^2P_2$.
\item When $R_K=0$, the system reduces to the $2$-user Gaussian Z-Interference channel without secret key. Therefore, the outer bound in Theorem \ref{Outer bound for weak/moderate interference regime} becomes $\log(1+h_d^2P)-0.5\log(1+h_c^nP)$, which is studied in \cite{li2008secrecy}.
\end{enumerate}
\begin{theorem}\label{Outer bound for high interference regime}
The secrecy rate of user~$2$  in case of 2-user Gaussian Z-IC with shared secret key of rate $R_K$ is upper bounded as follows
\begin{align}
   R_2 \leq 0.5\log\lb 1 + SNR - \frac{SNR.INR}{1 + SNR + INR} \rb + R_K. 
\end{align}
\end{theorem}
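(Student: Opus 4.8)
My plan is to reuse the genie-aided template of Theorem~\ref{Outer bound for weak/moderate interference regime}, but with a side-information choice tuned so that the leftover differential-entropy term collapses to the conditional entropy $h(Y_2^n\mid Y_1^n)$. The guiding observation is purely algebraic: with $SNR=h_d^2P$ and $INR=h_c^2P$, the claimed bound can be rewritten as
\begin{align}
0.5\log\lb 1+SNR-\frac{SNR\cdot INR}{1+SNR+INR}\rb=0.5\log\lb\frac{(1+SNR)^2+INR}{1+SNR+INR}\rb,\nonumber
\end{align}
and the argument on the right is exactly $\mathrm{Var}(Y_2\mid Y_1)$ for jointly Gaussian inputs at full power, since the determinant of the covariance matrix of $(Y_1,Y_2)$ equals $(1+SNR)^2+INR$ while $\mathrm{Var}(Y_1)=1+SNR+INR$. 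This tells me the correct genie is the \emph{entire} $Y_1^n$ (not the cleaned interference $S_2^n$, which would instead leave $1+INR$ in the denominator).

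First I would apply Fano's inequality while granting the key $K$ to Rx-$2$, which it legitimately holds, giving $nR_2\leq I(W_2;Y_2^n,K)+n\epsilon_n$. I would peel off the key via $I(W_2;Y_2^n,K)=I(W_2;Y_2^n)+I(W_2;K\mid Y_2^n)\leq I(W_2;Y_2^n)+H(K)$ together with $H(K)\leq nR_K$, and then invoke the secrecy constraint $I(W_2;Y_1^n)\leq n\epsilon_n$ to obtain $nR_2\leq I(W_2;Y_2^n)-I(W_2;Y_1^n)+nR_K+2n\epsilon_n$. The genie step follows from $I(W_2;Y_2^n)\leq I(W_2;Y_1^n,Y_2^n)=I(W_2;Y_1^n)+I(W_2;Y_2^n\mid Y_1^n)$, so the difference of mutual informations is bounded by $I(W_2;Y_2^n\mid Y_1^n)=h(Y_2^n\mid Y_1^n)-h(Y_2^n\mid Y_1^n,W_2)$. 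For the negative term I would further condition on $X_2^n$ and use that $Z_2^n$ is independent of $(Y_1^n,W_2,X_2^n)$ to get $h(Y_2^n\mid Y_1^n,W_2)\geq h(Y_2^n\mid Y_1^n,W_2,X_2^n)=h(Z_2^n)$, leaving $nR_2\leq h(Y_2^n\mid Y_1^n)-h(Z_2^n)+nR_K+2n\epsilon_n$.

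Finally I would single-letterize through $h(Y_2^n\mid Y_1^n)\leq\sum_i h(Y_{2i}\mid Y_{1i})$ and bound each term using the linear-MMSE inequality $h(Y_{2i}\mid Y_{1i})\leq h(Y_{2i}-\alpha Y_{1i})\leq 0.5\log\lb 2\pi e\,\mathrm{Var}(Y_{2i}-\alpha Y_{1i})\rb$, with $\alpha$ the MMSE coefficient, so the bracket equals the conditional variance expressed through the actual second moments. The main obstacle, and the step I would treat carefully, is the Gaussian maximization: I must show that $\mathrm{Var}(Y_2\mid Y_1)$, viewed as a function of the powers $Q_1=E[X_1^2]$ and $Q_2=E[X_2^2]$ with $Q_1,Q_2\leq P$ and $X_1\perp X_2$, is maximized at full power. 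Monotonicity in $Q_1$ is immediate, as it only enlarges the denominator $\mathrm{Var}(Y_1)$ of the subtracted term; monotonicity in $Q_2$ requires a short derivative computation in which the derivative simplifies to a manifestly positive quantity (the numerator collapsing to a perfect square). Combining the per-symbol bounds with Jensen's inequality to absorb the average power constraint, and letting $n\to\infty$, then yields the stated bound.
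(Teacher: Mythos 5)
Your proposal is correct and takes essentially the same route as the paper's proof: Fano's inequality with the key granted to Rx-$2$, the genie side information $Y_1^n$, cancellation of $I(W_2;Y_1^n)$ via the secrecy constraint, bounding the key term by $nR_K$, and evaluating $h(Y_2^n\mid Y_1^n)-h(Z_2^n)$ through the Gaussian conditional variance $E[Y_2^2]-E[Y_1Y_2]^2/E[Y_1^2]$. If anything, you are more explicit than the paper about the single-letterization and the full-power optimality of $\mathrm{Var}(Y_2\mid Y_1)$, which the paper compresses into a one-line appeal to Gaussian entropy maximization.
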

\begin{proof}
Using Fano's inequality, the rate of user $2$ is upper bounded as
\begin{align}
nR_2&\leq I(W_2;Y_2^n,K_2) + n\epsilon_n,\nonumber\\
    &\stackrel{(a)}\leq I(W_2;Y_2^n,Y_1^n, K_2) + n\epsilon_n,\nonumber\\
    &\stackrel{(b)}\leq I(W_2;Y_1^n) + I(W_2; Y_2^n|Y_1^n)\nonumber\\
    &\quad + I(W_2;K_2|Y_1^n,Y_2^n) - I(W_2;Y_1^n) + n\epsilon_n, \nonumber\\
    & = I(W_2;Y_2^n|Y_1^n) + I(W_2;K_2|Y_1^n, Y_2^n) + n\epsilon_n, \nonumber\\
    &\leq h(Y_2^n|Y_1^n) - h(Y_2^n|W_2, Y_1^n) + nR_K + n\epsilon_n,\nonumber\\
R_2&\stackrel{(b)}\leq 0.5\log \textstyle \sum_{Y_2|Y_1}+R_K,\\
\text{where}\quad&\textstyle \sum_{Y_2|Y_1}=E[Y_2^2]-\frac{ E[Y_2Y_1]^2}{E[Y_1^2]},\nonumber\\
&\qquad\quad=1 + SNR - \frac{SNR.INR}{1 + SNR + INR}.\nonumber
\end{align}
where (a) is obtained using the fact that providing additional side-information cannot reduce the rate; (b) is obtained using the secrecy condition at receiver~$1$ and (c) is obtained from the fact that the Gaussian distribution maximizes the differential entropy for a given power constraint.

Therefore, the rate of user $2$ is finally upper bounded as follows
\begin{align}
&R_2 \leq 0.5\log\lb 1 + SNR - \frac{SNR.INR}{1 + SNR + INR} \rb + R_K. 
\end{align}
\end{proof}
\textit{Remarks}
\begin{enumerate}
\item The outer bound in Theorem~\ref{Outer bound for high interference regime} can be extended to the asymmetric scenario and is stated in the following
\begin{align}
    R_2 &\leq  0.5\log\biggl(1 + SNR_2 - \frac{SNR_2.INR_1}{1 + SNR_1 + INR_1}\biggr) \nonumber\\
   &\qquad\qquad\qquad\qquad\qquad\qquad + R_K.
\end{align}
\item When the system is operating under interference limited regime and $INR >> SNR$, the rate in Theorem~\ref{Outer bound for high interference regime} is simplified as follows:
\begin{align}
	R_2 & \leq  0.5\log\lb 1 + SNR - \frac{SNR.INR}{1 + SNR + INR} \rb \nonumber\\
	&\qquad\qquad\qquad\qquad\qquad\qquad\qquad\qquad\quad+ R_K, \nonumber \\
	& \approx 0.5\log\lb 1 + SNR -  \frac{SNR.INR}{SNR + INR} \rb + R_K, \nonumber \\
	& \approx R_K
\end{align}
where the last equation is obtained using the fact that $INR >> SNR $ and $ SNR + INR = INR$. Hence, in this case, rate of user~$2$ is primarily limited by the rate of the key. 
\end{enumerate}
\section{Secure Generalized Degrees of Freedom Region (SGDOF)}
 The primary objective of this section is to study how the different schemes behave when the key rate is scaled with respect to underlying channel conditions such as SNR/INR. This can give new insights on how one needs to scale the key rate in interference limited scenario. To the best of author's knowledge, this kind of scaling of key rate with respect to underlying channel characteristics such as SNR/INR has not been explored in this existing literature. To capture such behaviour, the notion of secure generalized degree of freedom (SGDOF) is used in the paper. The notion of GDOF was proposed in~\cite{etkin-TIT-2008} which is a good approximation of capacity region under high SNR and INR. The characterization of the capacity region of the multiuser network has remained as a challenging problem even without secrecy constraint. Instead of the exact characterization of the capacity region, approximate characterization of it has received significant attention in the last decade.  In this paper, a modified definition of a secure GDOF region is proposed to take account of the key rate shared between the users. To simplify the presentation, it is assumed that $h_{11} = h_{22} = h_d$ and $h_{21} = h_c$. The signal-to-noise ratio and interference-to-noise ratio are defined as $SNR \triangleq h_d^2 P$ and $INR \triangleq h_c^2 P$. The secure GDOF region for the 2-user Gaussian ZIC with shared key is defined as follows. 
\begin{align}
D(\alpha, \gamma) \triangleq \displaystyle\lim_{(SNR, INR) \to \infty,  \alpha \text{ and } \gamma \text{ are fixed }} \tilde{D}(SNR, INR, R_K) \label{eq:gdof1}
\end{align}
where $\alpha \triangleq \frac{\log INR}{\log SNR}$,  $\gamma \triangleq \frac{R_K}{0.5\log SNR}$, and $\tilde{D}(SNR, INR, R_K)$ is defined as follows
\begin{align}
 \tilde{D}(SNR, INR, R_K) & =\biggl\{\lb \frac{R_1}{0.5\log SNR}, \frac{R_2}{0.5\log SNR}\rb \nonumber \\
&: (R_1,R_2) \in \mathcal{C} (SNR, INR, R_K) \Biggr \}. \nonumber
\end{align} 
and $\mathcal{C} (SNR, INR, R_K)$ denotes the secrecy capacity region of the considered model. In the following, the secure GDOF region for different schemes is characterized. 
\subsection{SGDOF region: Key rate Splitting}
To characterize the GDOF region of the achievable scheme in Section~\ref{achievable scheme:key splitting}, the following power allocation is used for the private message and common confidential message
\begin{align}
P_{2P} = \min\lb P, \frac{1}{h_c^2} \rb = \frac{1}{h_c^2} \text{ and } P_{2C} = \lb P - \frac{1}{h_c^2} \rb^{+} \label{gdof2}
\end{align}
The power of the private message is chosen such that it causes almost the same damage as caused by the noise at the receiver. This power allocation also ensures that the loss in rate due to stochastic encoding ($R_{2P}' = 0.5$) does not scale with INR. 
\begin{theorem}\label{gdof-key splitting} 
The SGDOF region of the scheme in Corollary~\ref{corollary-key splitting} without artificial noise transmission is given by
\begin{align}
      d_1(\alpha,\gamma)&\leq 1, \nonumber\\
      d_2(\alpha,\gamma)&\leq  \min\{\alpha, \eta\gamma\} + 1 - \alpha, \nonumber \\
      d_1(\alpha,\gamma) + d_2(\alpha,\gamma)&\leq2 -\alpha.
 \end{align}
 \end{theorem}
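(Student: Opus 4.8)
The plan is to specialize the achievable region of Corollary~\ref{corollary-key splitting} to the case of no artificial noise, i.e.\ set $P_{1A}=0$ (equivalently $\lambda_1=1$, so $P_{1M}=\beta_1 P_1$), and then substitute the power split in (\ref{gdof2}). The first step is to record the facts that drive every estimate. Since $P_{2P}=1/h_c^2$ and $h_{21}=h_c$, the private stream raises the interference floor seen at receiver~$1$ by exactly one, $h_{21}^2 P_{2P}=1$; consequently the equivocation loss is the constant $R_{2P}'=0.5\log(1+h_{21}^2P_{2P})=0.5\log 2=0.5$, which does not scale with $\mathrm{INR}$. At receiver~$2$ the private stream is seen through the direct link, $h_{22}^2P_{2P}=h_d^2/h_c^2=\mathrm{SNR}/\mathrm{INR}=\mathrm{SNR}^{1-\alpha}$, while the common stream obeys $h_{21}^2 P_{2C}=(\mathrm{INR}-1)^{+}$ and $h_{22}^2P_{2C}=h_d^2 P_{2C}$, of exponential order $\mathrm{INR}$ and $\mathrm{SNR}$ respectively for $\alpha\le 1$.

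Next I would divide every bound in Corollary~\ref{corollary-key splitting} by $0.5\log\mathrm{SNR}$ and pass to the limit $(\mathrm{SNR},\mathrm{INR})\to\infty$ with $\alpha,\gamma$ fixed, using $R_K=0.5\gamma\log\mathrm{SNR}$ so that $\eta R_K$ and $(1-\eta)R_K$ normalize to $\eta\gamma$ and $(1-\eta)\gamma$. For $R_1$ the denominator $1+h_{21}^2P_{2P}=2$ is constant and $h_{11}^2P_{1M}$ is of order $\mathrm{SNR}$, giving $d_1\le 1$. In the first minimum of the $R_2$ bound, both the receiver-$1$ term $0.5\log(1+h_{21}^2P_{2C}/2)$ and the receiver-$2$ term $0.5\log(1+h_{22}^2P_{2C}/(1+h_{22}^2P_{2P}))$ normalize to $\alpha$, while $\eta R_K$ normalizes to $\eta\gamma$, so the common part contributes $\min\{\alpha,\eta\gamma\}$. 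In the second minimum, $0.5\log(1+h_{22}^2P_{2P})$ normalizes to $1-\alpha$, and since $R_{2P}'$ is constant and $(1-\eta)R_K\ge 0$, the second argument normalizes to $(1-\alpha)+(1-\eta)\gamma$, which is at least the first; hence the private part contributes $1-\alpha$ and $d_2\le\min\{\alpha,\eta\gamma\}+1-\alpha$. For the sum bound, the first log has numerator $h_{11}^2P_{1M}+h_{21}^2P_{2C}$ of order $\mathrm{SNR}$ (dominant over $\mathrm{INR}$ when $\alpha\le1$) over the constant denominator $2$, normalizing to $1$, and the private part again contributes $1-\alpha$, giving $d_1+d_2\le 2-\alpha$.

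The step I expect to require the most care is justifying that the wiretap/key contribution does not inflate the private GDOF, which is precisely the purpose of the power allocation (\ref{gdof2}): because $P_{2P}$ is pinned so that it behaves like noise at receiver~$1$, the sacrificed rate $R_{2P}'=0.5$ stays bounded and therefore vanishes after normalization, so the second argument of the inner minimum can never fall below the first and the extra key rate $(1-\eta)\gamma$ is redundant for the private stream. The other subtlety is the regime restriction: the estimates $h_{21}^2P_{2C}\asymp\mathrm{INR}$, $h_{22}^2P_{2C}\asymp\mathrm{SNR}$ and $h_{22}^2P_{2P}=\mathrm{SNR}^{1-\alpha}\to\infty$ all rely on $\alpha\le 1$ (weak/moderate interference), so I would state the result in that regime and note that the $(P-1/h_c^2)^{+}$ truncation is inactive once $\mathrm{INR}>1$. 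With these two points settled, the three claimed inequalities follow directly from the normalized limits.
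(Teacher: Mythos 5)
Your proposal is correct and follows essentially the same route as the paper: specialize Corollary~\ref{corollary-key splitting} to no artificial noise, substitute the power split $P_{2P}=1/h_c^2$, $P_{2C}=(P-1/h_c^2)^{+}$ so that $R_{2P}'=0.5$ is an $\mathcal{O}(1)$ constant, normalize each bound by $0.5\log\mathrm{SNR}$, and pass to the limit to get $\min\{\alpha,\eta\gamma\}+1-\alpha$ for $d_2$ and $2-\alpha$ for the sum. Your added care about the constant denominator, the $\alpha\le 1$ restriction, and why the $(1-\eta)\gamma$ term is redundant in the inner minimum matches what the paper does implicitly.
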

 \begin{proof}
 Using the power allocation in (\ref{gdof2}) in Corollary~\ref{corollary-key splitting}, the rate of user~$1$ becomes
\begin{align}
 R_1  & \leq 0.5\log\lb 1+\frac{h_d^2P}{2} \rb,\label{eq:gdofscheme1}\\
& =0.5\log SNR-0.5+\mathcal{O}(1).\nonumber
\end{align}
In this case, it is assumed that no artificial noise is sent by the transmitter~1.  Hence, the achievable SGDOF of user $1$ becomes,
\begin{align}
   & d_1(\alpha,\gamma) \leq 1. \label{eq:gdofscheme1a}
\end{align}
The rate of user $2$ reduces to,
\begin{align}
R_2& \leq \min\{0.5\log INR, \eta R_K\}\nonumber\\
    &+\min\{(0.5\log SNR-0.5\log INR),(0.5\log SNR \nonumber\\
    &\qquad-0.5\log INR)+(1-\eta)R_K\} + \mathcal{O}(1). \label{eq:gdofscheme2}
\end{align}
Therefore, the achievable SGDOF for user $2$ becomes,
\begin{align}
   d_2(\alpha,\gamma) & \leq \min\{\alpha,\eta\gamma\} + \min\{1-\alpha,1-\alpha+(1-\eta)\gamma\}, \nonumber \\
  & = \min\{\alpha,\eta\gamma\} + 1 - \alpha. \label{eq:gdofscheme2a}
\end{align}
The achievable sum-rate under this power allocation simplifies to the following
\begin{align}
    R_1+R_2 &\leq 0.5\log(1+SNR+INR)+\min\{0.5\log SNR\nonumber\\
    &\qquad-0.5\log INR,0.5\log SNR-0.5\log INR \nonumber\\
    &\qquad+(1-\eta)R_K\}+\mathcal{O}(1). \label{eq:gdofscheme3}
\end{align}
Hence, the achievable sum SGDOF becomes,
\begin{align}
d_1(\alpha,\gamma) + d_2(\alpha,\gamma) & \leq 1 + \min\{1-\alpha,1-\alpha + (1-\eta)\gamma\}, \nonumber \\
& = 2 - \alpha.  \label{eq:gdofscheme3a}
\end{align}
Using (\ref{eq:gdofscheme1a}), (\ref{eq:gdofscheme2a}) and (\ref{eq:gdofscheme3a}), the result in the Theorem can be obtained.
 \end{proof}
Recall that by setting $\eta = 1$, one can obtain the achievable rate region corresponding to the scenario where the entire key rate is used for encoding the common confidential message and when $\eta = 0$, the entire key rate is used as a part of the stochastic encoding. The SGDOF when $\eta =1$ is given by the following corollary.
 \begin{corollary}\label{gdof-rate splitting}
By setting $\eta = 1$ in Theorem~\ref{gdof-key splitting}, the SGDOF region corresponding to the achievable scheme when the entire key is used for encoding the common confidential message is given below
\begin{align}
    d_1(\alpha,\gamma)&\leq 1,\nonumber\\
    d_2(\alpha,\gamma)&\leq \min(\alpha,\gamma)+1-\alpha,\nonumber\\
    d_1(\alpha,\gamma)+d_2(\alpha,\gamma)&\leq 2-\alpha.
\end{align}
 \end{corollary}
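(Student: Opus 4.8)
The plan is to obtain the stated region by directly specializing the general SGDOF characterization of Theorem~\ref{gdof-key splitting} to the value $\eta = 1$, since devoting the entire key to the common confidential message is precisely the $\eta = 1$ instance of the key-rate-splitting scheme (equivalently, the SGDOF counterpart of the rate region in Corollary~\ref{corollary-rate splitting approach}). First I would observe that two of the three bounds in Theorem~\ref{gdof-key splitting}, namely $d_1(\alpha,\gamma) \leq 1$ and $d_1(\alpha,\gamma) + d_2(\alpha,\gamma) \leq 2 - \alpha$, carry no explicit dependence on $\eta$; hence they transfer verbatim to the corollary with no further work. This already shows that the common-versus-wiretap allocation of the key affects only the single-user bound on $d_2$, and neither the bound on $d_1$ nor the sum bound.

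The only bound that requires attention is the one for $d_2$. I would return to the intermediate form used in the proof of Theorem~\ref{gdof-key splitting}, namely
\begin{align}
d_2(\alpha,\gamma) \leq \min\{\alpha,\eta\gamma\} + \min\{1-\alpha,\, 1-\alpha+(1-\eta)\gamma\}, \nonumber
\end{align}
and set $\eta = 1$. The wiretap-coding contribution $(1-\eta)\gamma$ then vanishes, so the second minimum collapses to $1-\alpha$, while the first term becomes $\min\{\alpha,\gamma\}$. This yields $d_2(\alpha,\gamma) \leq \min\{\alpha,\gamma\} + 1 - \alpha$, exactly as claimed, and it makes transparent that at $\eta = 1$ no key rate is spent on stochastic encoding and the full normalized key budget $\gamma$ is available to the common message.

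As an independent consistency check I would also re-derive the region from scratch using the rate expressions in Corollary~\ref{corollary-rate splitting approach} together with the power split $P_{2P} = 1/h_c^2$ and $P_{2C} = (P - 1/h_c^2)^+$ of~(\ref{gdof2}), following the same high-$SNR$/$INR$ approximations as in the proof of Theorem~\ref{gdof-key splitting} and normalizing each rate by $0.5\log SNR$. I do not expect any genuine obstacle here: the statement is a pure specialization, and the single point to verify carefully is that the $(1-\eta)\gamma$ term drops out cleanly at $\eta = 1$ so that the inner minima simplify as stated. The whole argument should occupy only a few lines.
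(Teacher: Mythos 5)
Your proposal is correct and matches the paper exactly: the corollary is obtained by direct substitution of $\eta = 1$ into the region of Theorem~\ref{gdof-key splitting}, under which the first term becomes $\min\{\alpha,\gamma\}$ and the $(1-\eta)\gamma$ contribution vanishes so the remaining minima collapse, while the $d_1$ and sum bounds are unaffected. The paper treats this as an immediate specialization with no separate proof, so your argument is the intended one.
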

 In the following, the achievable SGDOF region for $\eta=0$ (when key is used as a part of wiretap coding) is presented. In this case, two SGDOF regions are obtained with different power allocations. Using time-sharing between the two SGDOF regions  $\mathcal{D}^{I}(\alpha, \gamma)$ and $\mathcal{D}^{II}(\alpha, \gamma)$, the SGDOF region is obtained. 

The SGDOF region when $\eta=0$ is given by the following corollary.
 \begin{corollary}\label{gdof-key is used as a part of WC}
 The SGDOF region is obtained by taking time-sharing between the two SGDOF regions as given below where the key is used as a part of wiretap coding
 \begin{equation}
\left. \begin{array}{l}
d_{1}(\alpha,\gamma) \leq 1-\alpha, \\
d_{2}(\alpha,\gamma)  \leq  \min(1,1-\alpha+\gamma).
\end{array} \rcb \mathcal{D}^{I}(\alpha, \gamma) \label{fic-fixed power split} 
\end{equation}
\begin{equation}
\left. \begin{array}{l}
d_{1}(\alpha,\gamma)\leq 1, \\ 
d_{2}(\alpha,\gamma)\leq 1 - \alpha.\label{fic-time sharing}
\end{array} \rcb \mathcal{D}^{II}(\alpha, \gamma) 
\end{equation}
\begin{proof}
The SGDOF region in (\ref{fic-fixed power split}) is derived by considering the following power allocation $P_1 = P_2 = P$. The rate of user~$1$ and $2$ reduces to the following form
\begin{align}
R_1&\leq 0.5\log\biggl(1+\frac{h_d^2P}{1+h_c^2P}\biggr), \nonumber \\
&=0.5\log(1+SNR+INR)-0.5\log(1+INR),\nonumber\\
&=0.5\log SNR-0.5\log INR+\mathcal{O}(1), \label{eq:godfachscheme5aa}
\end{align}
hence, the achievable SGDOF for user $1$ becomes
\begin{align}
    d_1(\alpha,\gamma)&\leq 1-\alpha. \label{eq:godfachscheme5a}
\end{align}
Similarly, the rate of user $2$ is simplified to,
\begin{align}
R_2& \leq  \min\{0.5\log (1+h_d^2P), 0.5\log (1+h_d^2P)\nonumber\\
&\qquad\qquad-0.5\log (1+h_c^2P)  + R_K\},\\
& = \min\{0.5\log SNR, 0.5\log SNR-0.5\log INR  + R_K\}  \nonumber\\
& \qquad \qquad \qquad\qquad + \mathcal{O}(1). \label{eq:godfachscheme10}
\end{align}
Hence, the achievable SGDOF for user $2$ becomes
\begin{align}
    d_2(\alpha,\gamma) & \leq \min\{1, 1-\alpha+\gamma\}. \label{eq:godfachscheme11}
\end{align}
Using (\ref{eq:godfachscheme5a}) and (\ref{eq:godfachscheme11}), the SGDOF region $\mathcal{D}^{I}(\alpha, \gamma)$ in the theorem is obtained.

For the other region, the following power allocation is considered $P_1=P$, and $P_2=\frac{1}{h_c^2}$. In this case, no common confidential message is sent and $\eta = 0$. Using this, in Theorem~\ref{gdof-key splitting}, the following GDOF region is obtained:
\begin{align}
	d_1(\alpha,\gamma)&\leq 1, \nonumber\\
	d_2(\alpha,\gamma)&\leq  1 - \alpha, \nonumber \\
	d_1(\alpha,\gamma) + d_2(\alpha,\gamma)&\leq2 -\alpha.
\end{align}
Given the first two equations, the last equation is redundant. Hence, the SGDOF region $\mathcal{D}^{II}(\alpha, \gamma)$ in the theorem is obtained.
\end{proof}
 \end{corollary}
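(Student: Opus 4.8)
The plan is to obtain the two constituent SGDOF regions $\mathcal{D}^{I}(\alpha,\gamma)$ and $\mathcal{D}^{II}(\alpha,\gamma)$ by specializing the $\eta=0$ achievable rate expressions to two different power allocations, passing to the limit in the SGDOF definition (\ref{eq:gdof1}), and then time-sharing between the two resulting regions. For $\mathcal{D}^{I}$ I would start from the rate region in Corollary~\ref{corollary-key is used as a part of wiretap coding}, where the entire key is spent inside the wiretap code and receiver~$1$ treats interference as noise. For $\mathcal{D}^{II}$ I would instead use the general expression of Theorem~\ref{gdof-key splitting} evaluated at $\eta=0$ with the private-message power pinned to the noise floor.

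For the first region, substitute $P_1=P_2=P$ in Corollary~\ref{corollary-key is used as a part of wiretap coding}. For user~$1$ I would write $0.5\log\lb 1+\tfrac{h_d^2P}{1+h_c^2P}\rb = 0.5\log(1+SNR+INR)-0.5\log(1+INR)$, which for $SNR>INR$ equals $0.5\log SNR-0.5\log INR+\mathcal{O}(1)$; dividing by $0.5\log SNR$ and using $\alpha=\tfrac{\log INR}{\log SNR}$ gives $d_1\le 1-\alpha$. For user~$2$ the two arguments of the minimum become $0.5\log SNR+\mathcal{O}(1)$ and $0.5\log SNR-0.5\log INR+R_K+\mathcal{O}(1)$; normalizing and invoking $\gamma=\tfrac{R_K}{0.5\log SNR}$ yields $d_2\le\min(1,\,1-\alpha+\gamma)$, which is exactly $\mathcal{D}^{I}(\alpha,\gamma)$.

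For the second region, set $P_1=P$ and $P_2=\tfrac{1}{h_c^2}$, so that $P_{2C}=0$, no common confidential message is sent, and $\eta=0$. Feeding this into Theorem~\ref{gdof-key splitting} collapses the user-$2$ bound to $\min(\alpha,\eta\gamma)+1-\alpha=1-\alpha$ since $\eta=0$, while the interference $h_c^2P_2=1$ at receiver~$1$ is at the noise level, so $R_1\le 0.5\log\lb 1+\tfrac{SNR}{2}\rb$ gives $d_1\le 1$; the sum bound $d_1+d_2\le 2-\alpha$ is then implied by $d_1\le 1$ and $d_2\le 1-\alpha$ and may be discarded. This produces $\mathcal{D}^{II}(\alpha,\gamma)$. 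Since the achievable region is closed under time-sharing, the SGDOF region is the convex hull of $\mathcal{D}^{I}$ and $\mathcal{D}^{II}$, as claimed.

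I expect the main obstacle to be the bookkeeping of the $\mathcal{O}(1)$ terms and the argument that they vanish after dividing by $0.5\log SNR$. The most delicate constant is the rate sacrificed for secrecy, $R_{2P}'=0.5\log\lb1+\tfrac{h_c^2P_{2P}}{1+h_{11}^2P_{1A}}\rb$, which must stay bounded independently of $INR$; this is precisely why $P_{2P}$ is set to $\tfrac{1}{h_c^2}$ in (\ref{gdof2}), making $h_c^2P_{2P}=1$ and hence $R_{2P}'=\mathcal{O}(1)$, so it does not erode the degrees of freedom. Checking that every additive constant is $o(\log SNR)$ under both allocations, and that the restriction $SNR>INR$ (equivalently $\alpha\le 1$) is what lets $0.5\log(1+SNR+INR)=0.5\log SNR+\mathcal{O}(1)$, is the step needing the most care.
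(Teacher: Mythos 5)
Your proposal matches the paper's proof essentially step for step: the region $\mathcal{D}^{I}(\alpha,\gamma)$ is obtained by substituting $P_1=P_2=P$ into the $\eta=0$ rate region and normalizing, $\mathcal{D}^{II}(\alpha,\gamma)$ by setting $P_1=P$, $P_2=\tfrac{1}{h_c^2}$, $\eta=0$ in Theorem~\ref{gdof-key splitting} (with the sum constraint discarded as redundant), and the final region by time-sharing. Your additional remarks on the $\mathcal{O}(1)$ bookkeeping, the boundedness of $R_{2P}'$, and the role of $\alpha\le 1$ are consistent with the paper's implicit assumptions and do not change the argument.
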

 \subsection{SGDOF region: Key is Used as a One Time Pad}
The following theorem gives the SGDOF region when a key is used as a one-time pad. In this case, two SGDOF regions are obtained with different power allocations. Using time-sharing between the two SGDOF regions $\mathcal{D}^{I}(\alpha, \gamma)$ and $\mathcal{D}^{II}(\alpha, \gamma)$, the SGDOF region is obtained. 
\begin{theorem}\label{theorem-gdof of key as a one time pad}
The SGDOF region is obtained by taking time-sharing between the two SGDOF regions as given below when the key is used as a one-time pad
\begin{equation}
\left. \begin{array}{l}
	 d_{1}(\alpha,\gamma) \leq 1-\alpha, \\
	d_{2}(\alpha,\gamma)  \leq \min(\gamma, 1)
\end{array} \rcb \mathcal{D}^{I}(\alpha, \gamma) \label{eq:otp-fixed-power-split} 
\end{equation}
\begin{equation}
\left. \begin{array}{l}
d_{1}(\alpha,\gamma) \leq 1, \\
d_{2}(\alpha,\gamma) \leq \min(\gamma,1-\alpha).
\end{array} \rcb \mathcal{D}^{II}(\alpha, \gamma) \label{eq:otp-time-sharing} 
\end{equation}
\end{theorem}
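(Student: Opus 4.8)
The plan is to specialize the achievable region of Theorem~\ref{theorem-key as a one time pad} to two carefully chosen power allocations, normalize the resulting rates by $0.5\log SNR$, pass to the limit $(SNR,INR)\to\infty$, and then time-share. Recall that under the symmetric setting $h_{11}=h_{22}=h_d$, $h_{21}=h_c$, $P_1=P_2=P$, the one-time-pad scheme achieves $R_1\leq 0.5\log\!\left(1+\tfrac{h_d^2\beta_1 P}{1+h_c^2\beta_2 P}\right)$ and $R_2\leq \min\{R_K,\,0.5\log(1+h_d^2\beta_2 P)\}$. Dividing each achievable rate by $0.5\log SNR$ and substituting $\alpha=\tfrac{\log INR}{\log SNR}$ and $\gamma=\tfrac{R_K}{0.5\log SNR}$ turns every achievable rate into an achievable SGDOF value, so the whole proof reduces to evaluating these two expressions for two allocations.

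First I would treat $\mathcal{D}^{I}(\alpha,\gamma)$, obtained with full power $\beta_1=\beta_2=1$. Then $R_1\leq 0.5\log\!\left(1+\tfrac{SNR}{1+INR}\right)=0.5\log SNR-0.5\log INR+\mathcal{O}(1)$, so $d_1\to 1-\alpha$, while $R_2\leq\min\{R_K,\,0.5\log(1+SNR)\}$ gives $d_2\to\min\{\gamma,1\}$. This reproduces~(\ref{eq:otp-fixed-power-split}). Next I would derive $\mathcal{D}^{II}(\alpha,\gamma)$ by shrinking the interfering power so that transmitter~$2$'s signal reaches receiver~$1$ only at the noise floor: take $\beta_1=1$ and $\beta_2 P=\tfrac{1}{h_c^2}$ (equivalently $\beta_2=\tfrac{1}{INR}\leq 1$ for $INR\geq 1$). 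Then $h_c^2\beta_2 P=1$, so $R_1\leq 0.5\log\!\left(1+\tfrac{SNR}{2}\right)=0.5\log SNR+\mathcal{O}(1)$ and $d_1\to 1$; meanwhile $h_d^2\beta_2 P=\tfrac{SNR}{INR}$, so $R_2\leq\min\{R_K,\,0.5\log(1+\tfrac{SNR}{INR})\}$ yields $d_2\to\min\{\gamma,1-\alpha\}$, which is~(\ref{eq:otp-time-sharing}).

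Finally I would invoke the standard time-sharing argument: since the secrecy capacity region is convex, the set of achievable SGDOF pairs is convex, and hence every convex combination of a point in $\mathcal{D}^{I}(\alpha,\gamma)$ and a point in $\mathcal{D}^{II}(\alpha,\gamma)$ is achievable, giving the stated region. The only genuinely delicate points are (i) verifying that $\beta_2=\tfrac{1}{INR}$ is a feasible power-split parameter in the regime $INR\geq 1$ over which the limit is taken, and (ii) confirming that the $\mathcal{O}(1)$ additive gaps from the ``$1+$'' terms and the constant $2$ in the denominator vanish after normalization by $0.5\log SNR\to\infty$, so no region boundary is shifted. The main conceptual step, rather than any calculation, is recognizing that neither allocation dominates — Region~I maximizes $d_2$ at the cost of $d_1$, while Region~II does the reverse — so the full region is traced out only by time-sharing between the two, not by any single power allocation.
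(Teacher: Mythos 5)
Your proposal is correct and follows essentially the same route as the paper: the same two power allocations ($\beta_1=\beta_2=1$ for $\mathcal{D}^{I}$ and $\beta_2 P = 1/h_c^2$ for $\mathcal{D}^{II}$), the same normalization and limit, and the same time-sharing step. The only addition is your explicit check that $\beta_2 = 1/INR$ is a feasible power split, which the paper leaves implicit.
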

\begin{proof}\label{proof:gdof-key as a one time pad}
The above result is obtained using the result from Theorem~\ref{theorem-key as a one time pad}, where the following power allocation is considered: $P_1 = P_2 = P$. The rate for user~$1$ becomes
\begin{align}
R_1&\leq 0.5\log\biggl(1+\frac{h_d^2P}{1+h_c^2P}\biggr),\\
&=0.5\log(1+SNR+INR)-0.5\log(1+INR),\nonumber\\
&=0.5\log SNR-0.5\log INR+\mathcal{O}(1). \label{eq:godfachscheme5}
\end{align}
Hence, the achievable SGDOF of user $1$ becomes
\begin{align}
    d_1(\alpha,\gamma)&\leq 1-\alpha. \label{eq:godfachscheme6a}
\end{align}
Similarly, the rate of user $2$ is simplified to
\begin{align}
R_2 &\leq \min\{R_K,0.5\log(1 + SNR)\}, \\
&=\min\{R_K,0.5\log SNR\}+ \mathcal{O}(1). \label{eq:godfachscheme6}
\end{align}
Hence, the achievable SGDOF for user~$2$ becomes,
\begin{align}
    d_2(\alpha,\gamma)&\leq \min\{\gamma,1\}. \label{eq:godfachscheme6b}
\end{align}
Using (\ref{eq:godfachscheme6a}) and (\ref{eq:godfachscheme6b}), the SGDOF region $\mathcal{D}^{I}(\alpha, \gamma)$ in the theorem is obtained.

The SGDOF region in  (\ref{eq:otp-time-sharing}) can be obtained by considering the following power allocation: $P_1=P$, and $P_2=\frac{1}{h_c^2}$. Under this power allocation, the rate of user $1$ simplifies to the following
\begin{align}
R_1&\leq 0.5\log\biggl(1+\frac{h_d^2P}{2}\biggr),\\ 
&=0.5\log(2 + SNR) + \mathcal{O}(1).  \label{eq:godfachscheme7}
\end{align}
For high SNR, the achievable SGDOF of user $1$ becomes
\begin{align}
    d_1(\alpha,\gamma)\leq 1.  \label{eq:godfachscheme7a}
\end{align}
Similarly, the rate of user $2$ is simplified to
\begin{align}
R_2&\leq \min\biggl\{R_K,0.5\log\biggl(1+\frac{h_d^2}{h_c^2}\biggr)\biggr\},\\
&=\min \lcb R_K,0.5\log SNR  - 0.5\log INR + \mathcal{O}(1)\rcb,  \label{eq:godfachscheme8}
\end{align}
Hence, the achievable SGDOF of user $2$ reduces to,
\begin{align}
    d_2(\alpha,\gamma)&\leq \min\{\gamma,1-\alpha\}. \label{eq:godfachscheme7b}
\end{align}
Using (\ref{eq:godfachscheme7a}) and (\ref{eq:godfachscheme7b}), the SGDOF region $\mathcal{D}^{II}(\alpha, \gamma)$ in the theorem is obtained.
\end{proof}
\section{Numerical Results}
\begin{figure*}[t]
	\centering
	\begin{subfigure}[b]{0.32\textwidth}
		\includegraphics[width=1\textwidth]{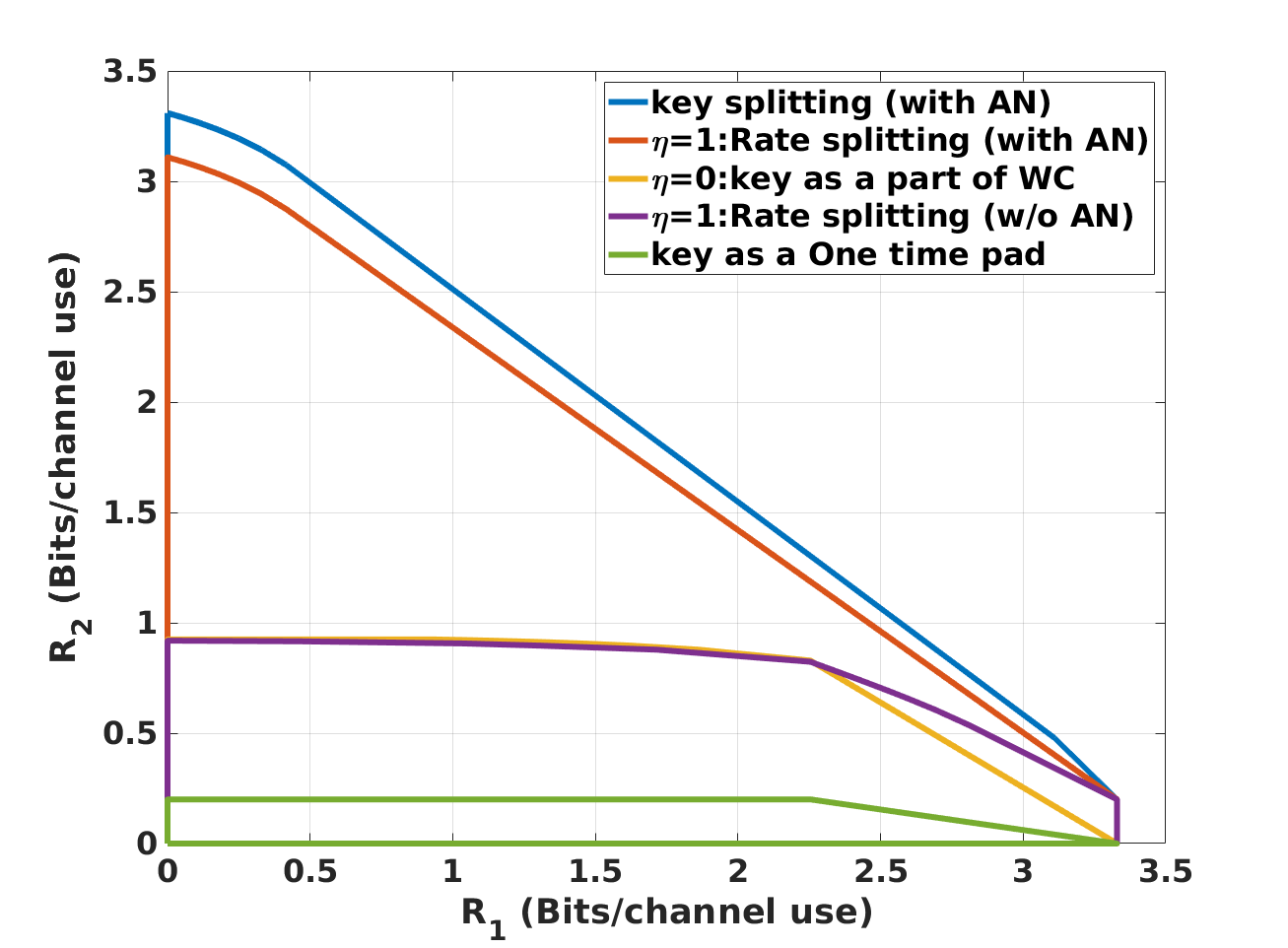}
		\caption{$h_{21}=0.6$ and $R_K=0.2$}
	\end{subfigure}
	\begin{subfigure}[b]{0.32\textwidth}
		\includegraphics[width=1\textwidth]{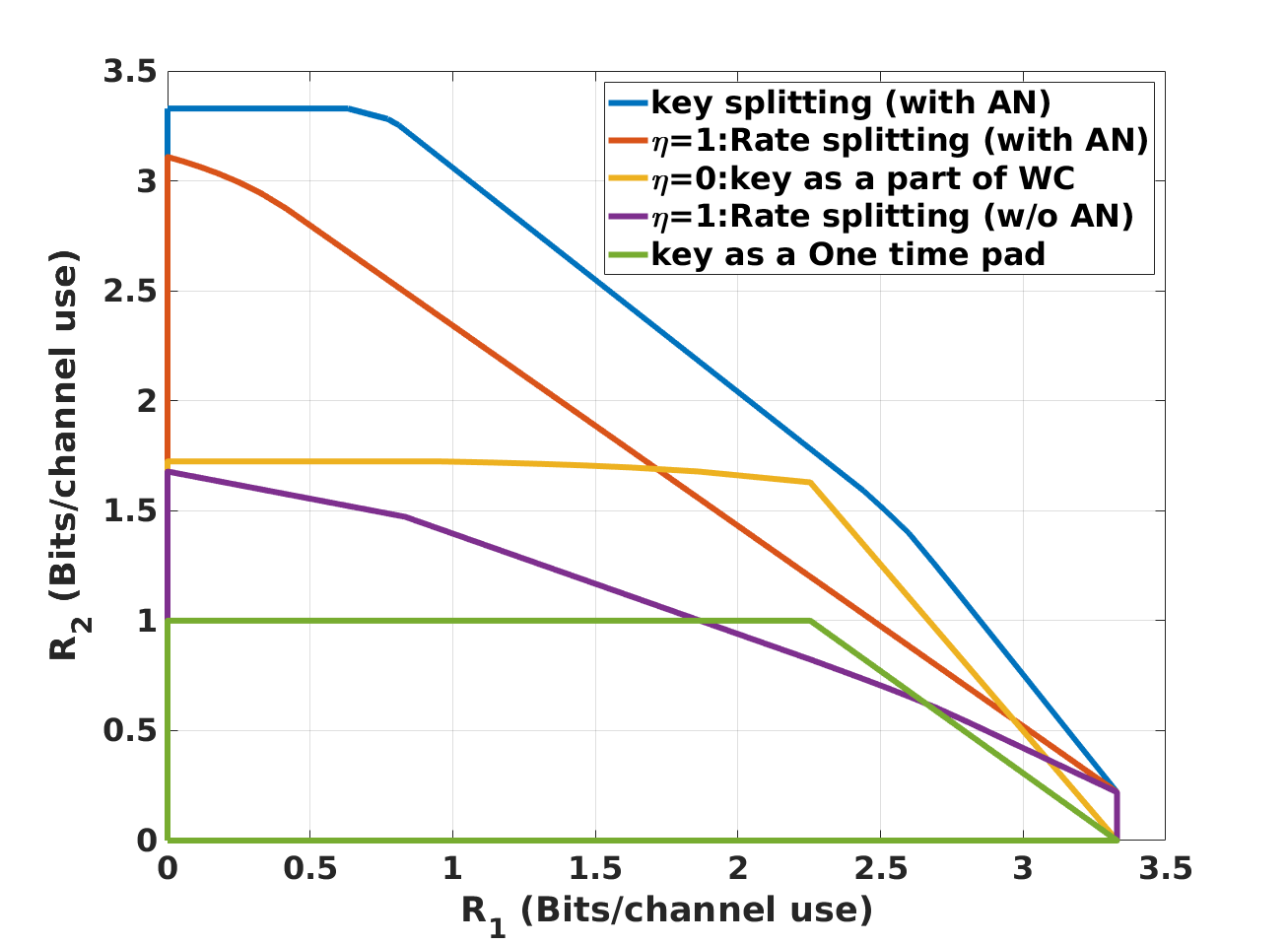}
		\caption{$h_{21}=0.6$ and $R_K=1$}
	\end{subfigure}
	\begin{subfigure}[b]{0.32\textwidth}
		\includegraphics[width=1\textwidth]{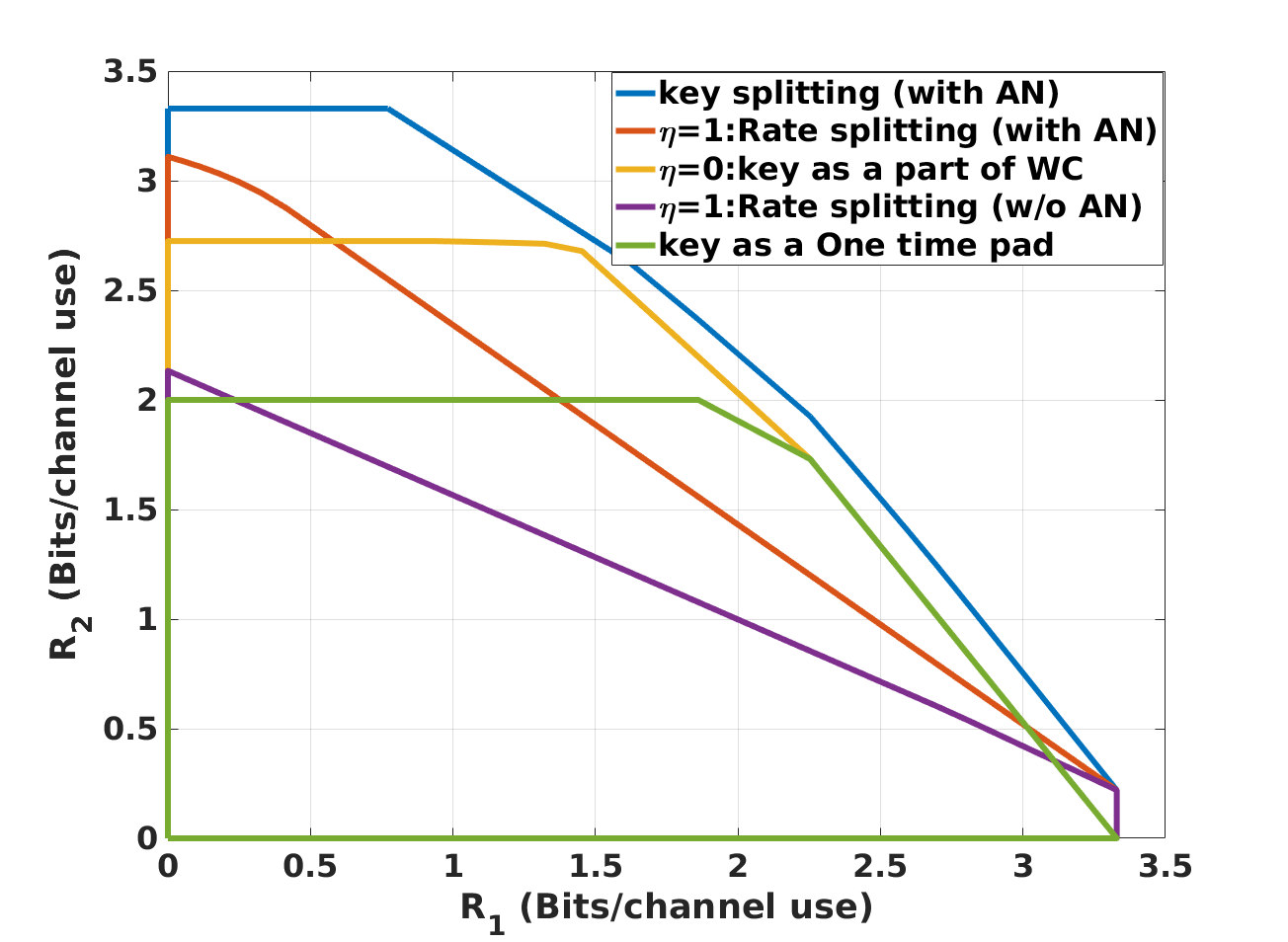}
		\caption{$h_{21}=0.6$ and $R_K=2$}
	\end{subfigure}
	\centering
	\begin{subfigure}[b]{0.32\textwidth}
		\includegraphics[width=1\textwidth]{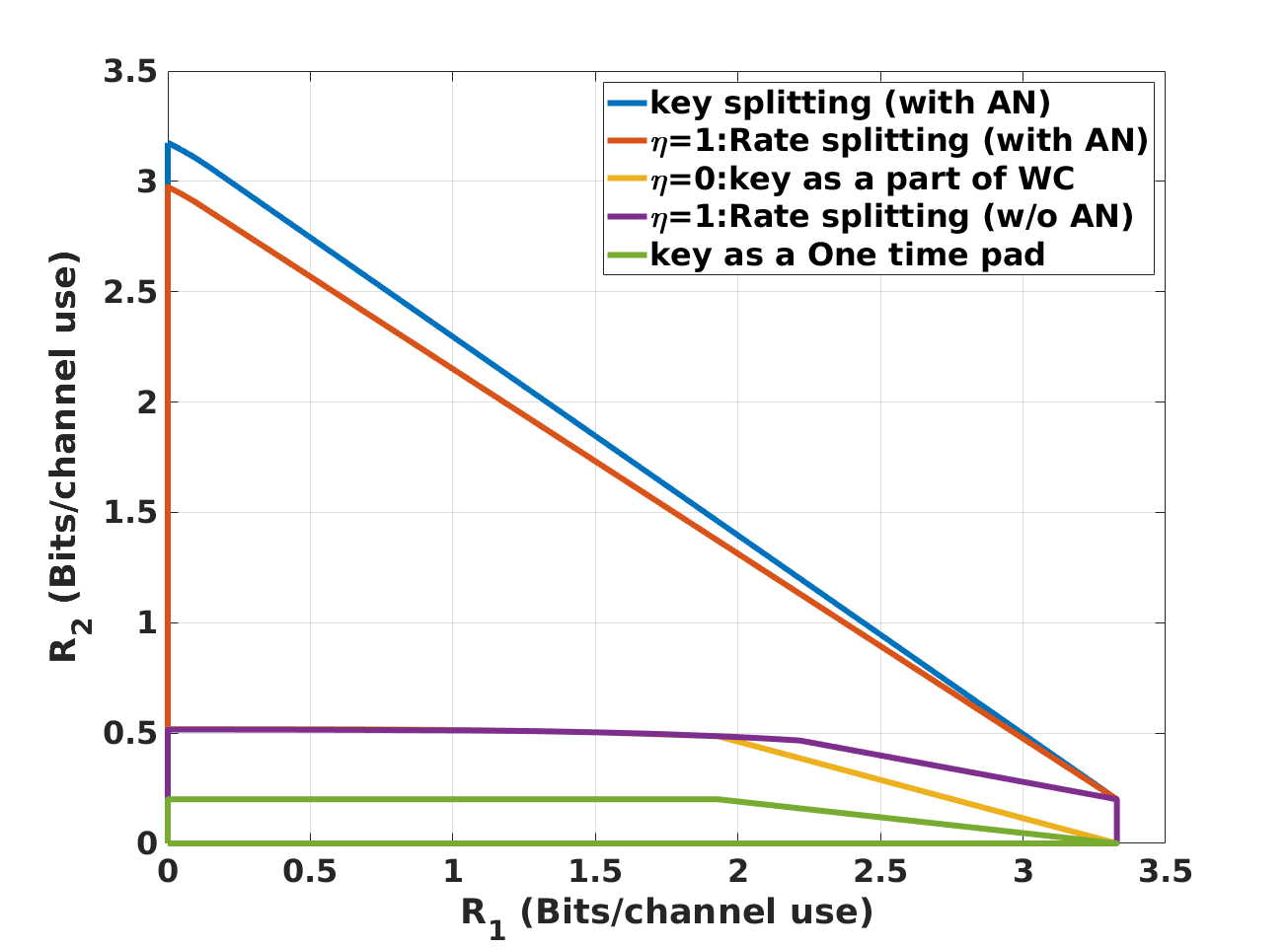}
		\caption{$h_{21}=0.8$ and $R_K=0.2$}
	\end{subfigure}
	\begin{subfigure}[b]{0.32\textwidth}
		\includegraphics[width=1\textwidth]{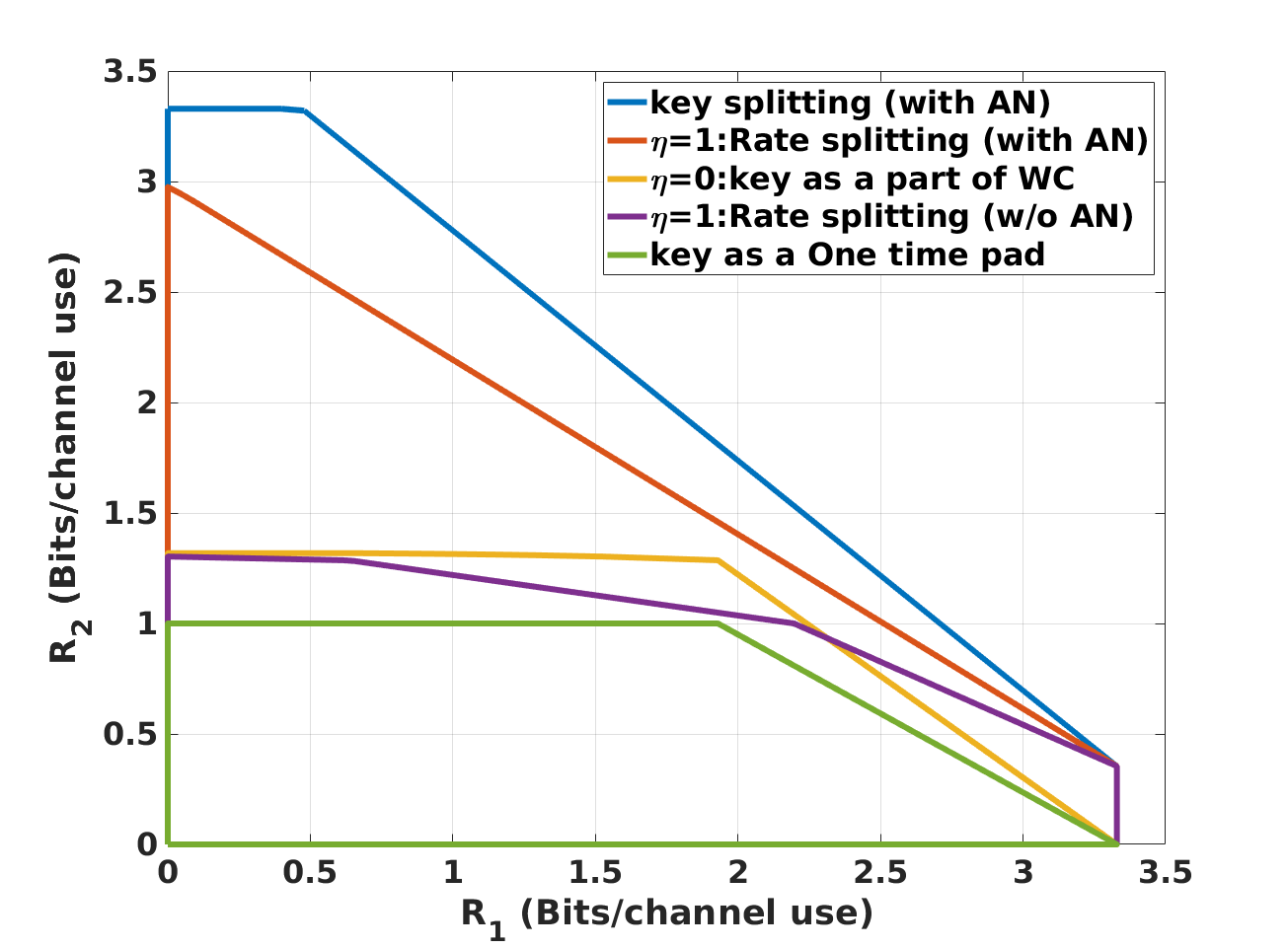}
		\caption{$h_{21}=0.8$ and $R_K=1$}
	\end{subfigure}
	\begin{subfigure}[b]{0.32\textwidth}
		\includegraphics[width=1\textwidth]{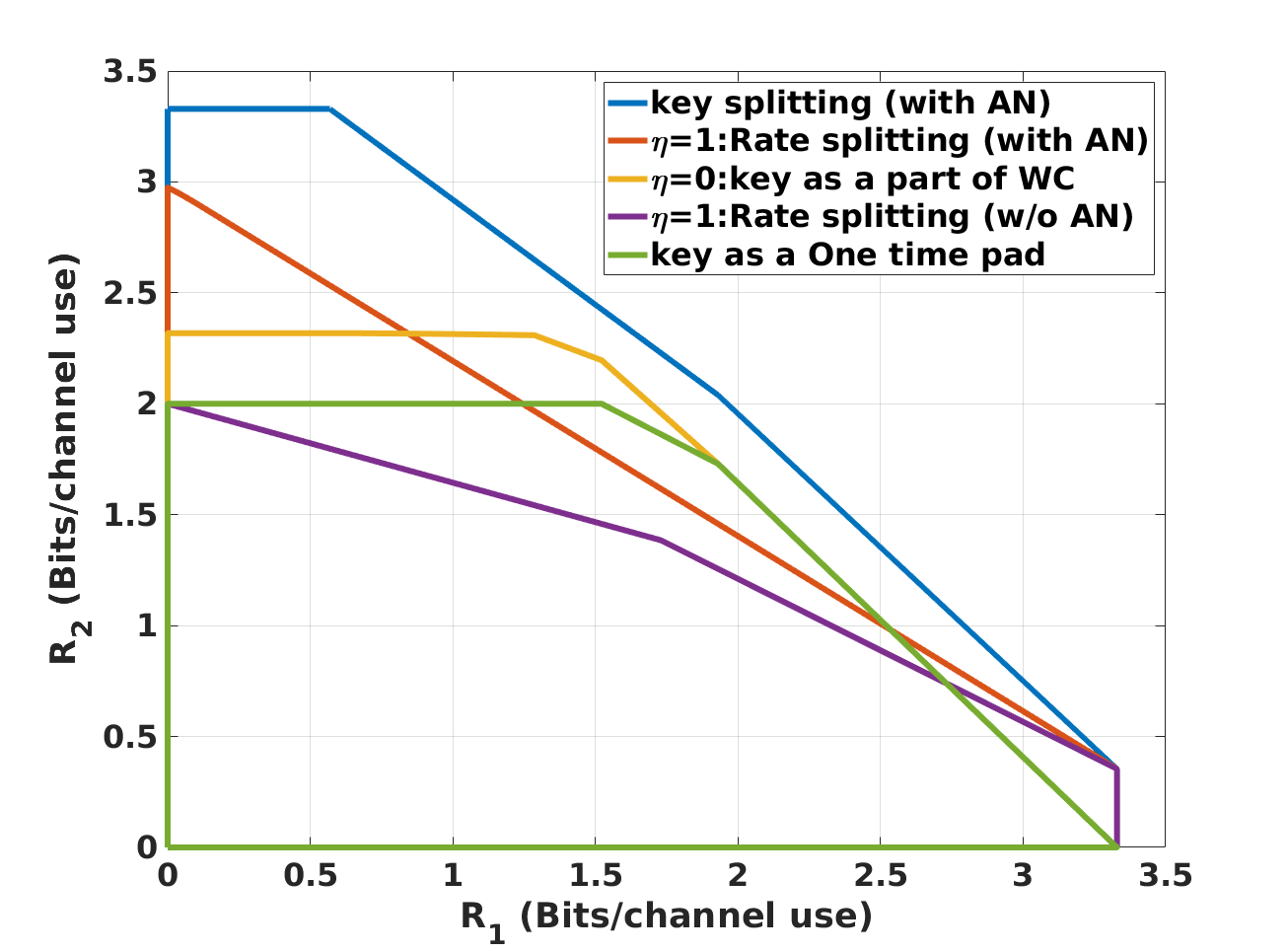}
		\caption{$h_{21}=0.8$ and $R_K=2$}
	\end{subfigure}
	\caption{Achievable rate regions with $P_1=P_2=100$ and $h_{11}=h_{22}=1$ for weak/moderate interference regime. In the legend WC stands for wiretap coding}\label{fig:achievable rate regions for weak/moderate interference regime}
\end{figure*}
\begin{figure*}[!]
	\centering
	\begin{subfigure}[b]{0.32\textwidth}
		\includegraphics[width=1\textwidth]{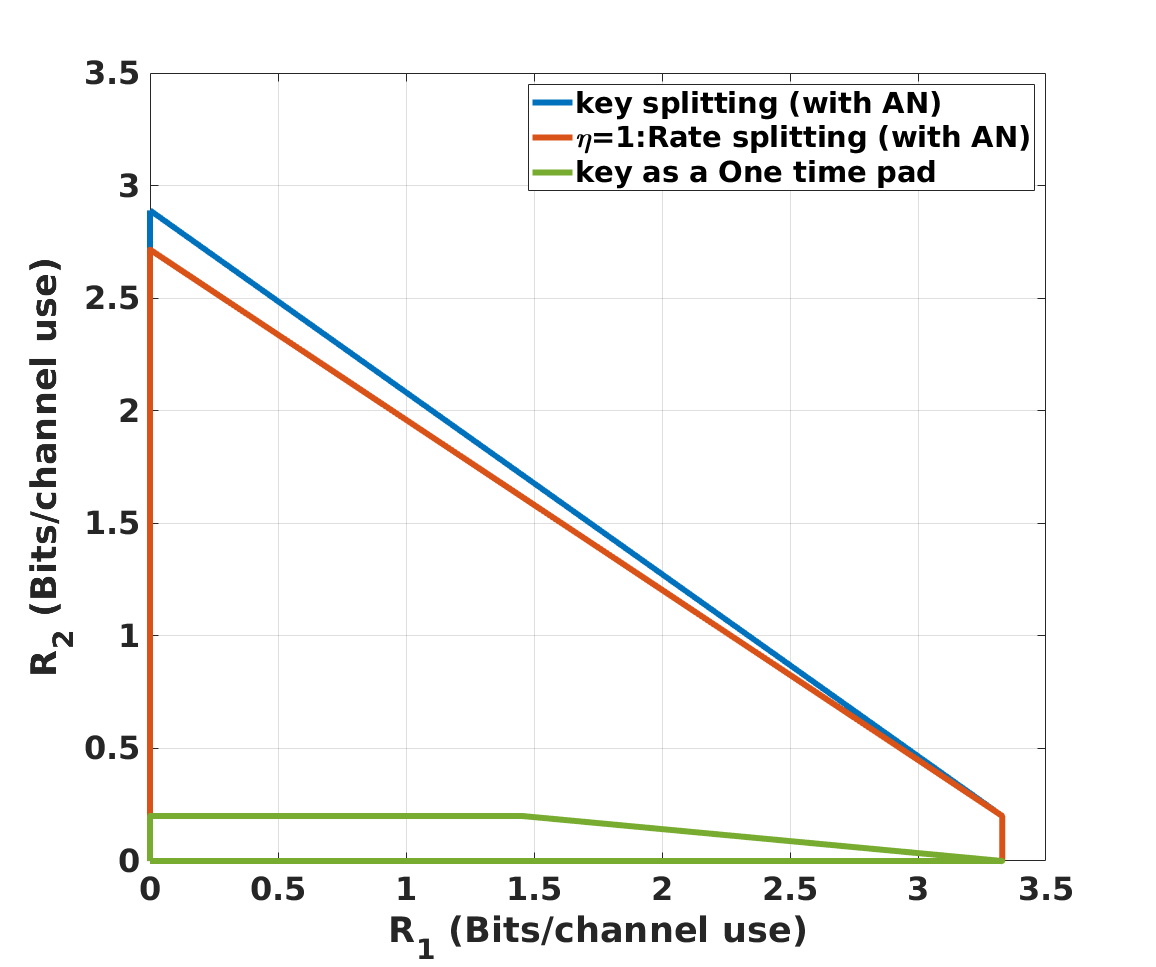}
		\caption{$h_{21}=1.2$ and $R_K=0.2$}
	\end{subfigure}
	\begin{subfigure}[b]{0.32\textwidth}
		\includegraphics[width=1\textwidth]{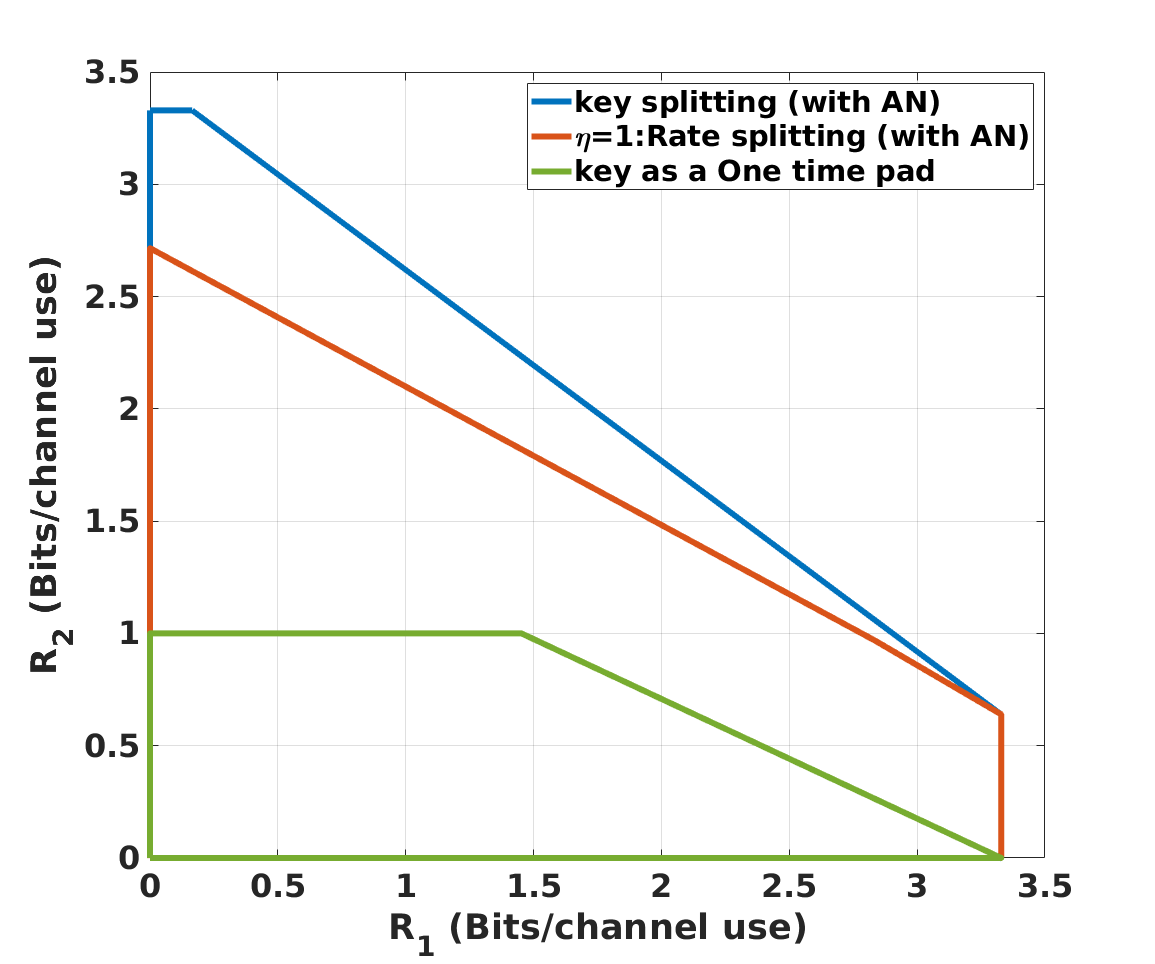}
		\caption{$h_{21}=1.2$ and $R_K=1$}
	\end{subfigure}
	\begin{subfigure}[b]{0.32\textwidth}
		\includegraphics[width=1\textwidth]{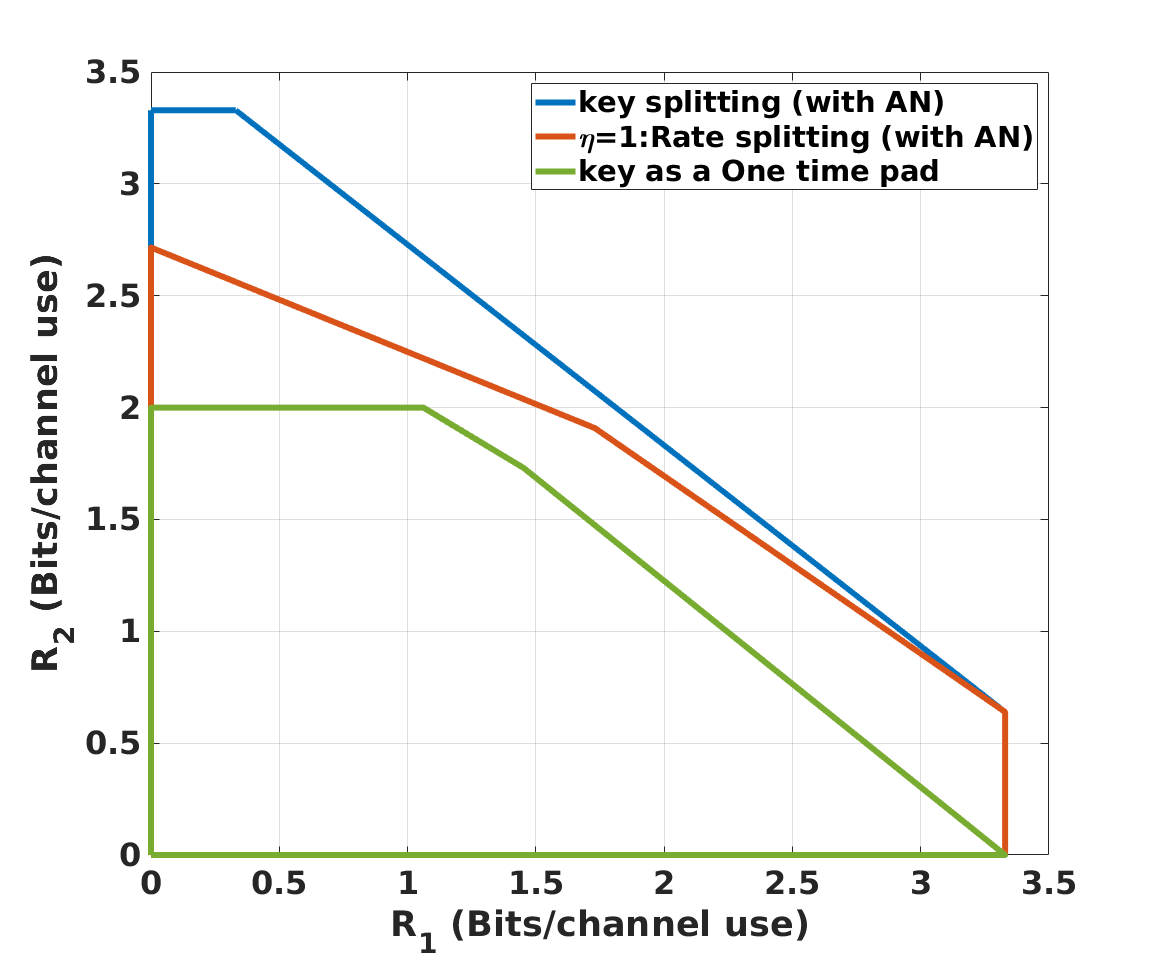}
		\caption{$h_{21}=1.2$ and $R_K=2$}
	\end{subfigure}
	\caption{Achievable rate regions with $P_1=P_2=100$ and $h_{11}=h_{22}=1$ for high interference regime.}\label{fig:achievable rate regions for high interference regime}
\end{figure*}
This section presents the performance of the various schemes and effectiveness of the outer bounds under different channel conditions and key rates. In Figs~\ref{fig:achievable rate regions for weak/moderate interference regime} and \ref{fig:achievable rate regions for high interference regime}, the achievable rates regions are plotted for different schemes for the weak/moderate interference and high interference regime, respectively. The results stated in Corollaries~\ref{corollary-key splitting}-\ref{corollary-key is used as a part of wiretap coding} and Theorem~\ref{theorem-key as a one time pad} are labelled as \emph{Key Splitting (with AN)}, \emph{Rate Splitting (with AN)}, \emph{Key as a part of WC}, and \emph{Key as a one time pad}, respectively. When the secret key is used only to encode the confidential message without AN transmission is also plotted using the result from Corollary~\ref{corollary-rate splitting approach} and this is labelled as \emph{Rate splitting (w/o AN)}. This also helps to explore the need of artificial noise transmission when users have access to shared key of finite rate.

For Fig~\ref{fig:achievable rate regions for weak/moderate interference regime}, both the direct channel gains are set as $h_{11} = h_{22} = 1$ and two cases are considered for the interfering link $(h_{21} = 0.6$ and $h_{21} = 0.8)$.  It can be observed that the achievable scheme with key rate splitting outperforms all the schemes for different values of $R_K$. Hence, using the entire key for encoding the common confidential message $(\eta = 1)$ or a part of the wiretap coding $(\eta = 0)$ can be highly sub-optimal. On the other hand, splitting the key rate into two parts allows to decode some part of the interference at the unintended receiver without violating the secrecy constraint and at the same time, it allows to reduce the loss in rate due to stochastic encoding. Hence, splitting the key rate helps to achieve better performance in comparison to other schemes. One can also note that when user~$2$ achieves the maximum rate, user~$1$ can achieve non-zero rate for the scheme based on key-splitting/rate splitting (Corollaries~ \ref{corollary-key splitting} and  \ref{corollary-rate splitting approach}). The scheme based on one-time pad, where the receiver~$1$ treats interference as noise can be highly suboptimal and the performance is primarily limited by the key rate. For the achievable schemes based on one-time pad or when key is used as a part of wiretap coding, when user~$1$ achieves the maximum rate, user~$2$ cannot achieve non-zero rate in contrast to other schemes. This illustrates the benefit of decoding some part of the interference without violating the secrecy constraint. It can also be observed that with increase in the strength of the cross-channel coefficient, there is shrinkage in the achievable rate region. 

In Fig~\ref{fig:achievable rate regions for high interference regime}, the achievable rate regions for different schemes are plotted for the high interference regime $h_{11} = h_{22} = 1$ and $h_{21} = 1.2$. The results without artificial noise transmission (Corollary~\ref{corollary-rate splitting approach} with $\lambda_1 = 1$) and when key is used as a part of the wiretap coding (Corollary~\ref{corollary-key is used as a part of wiretap coding}) are not considered as these schemes are not applicable for the high-interference regime. Note that to send private message using wiretap coding, it is required that the interfering channel should be weaker than the direct channel.  From the plot, it can be seen that the achievable scheme based on key rate splitting with artificial noise transmission performs the best among the schemes even in the high interference regime. This shows the importance of splitting the key rate and superposition coding  in case of interference limited scenarios.

\begin{figure*}[!]
	\centering
	\begin{subfigure}[b]{0.49\textwidth}
		\includegraphics[width=1\textwidth]{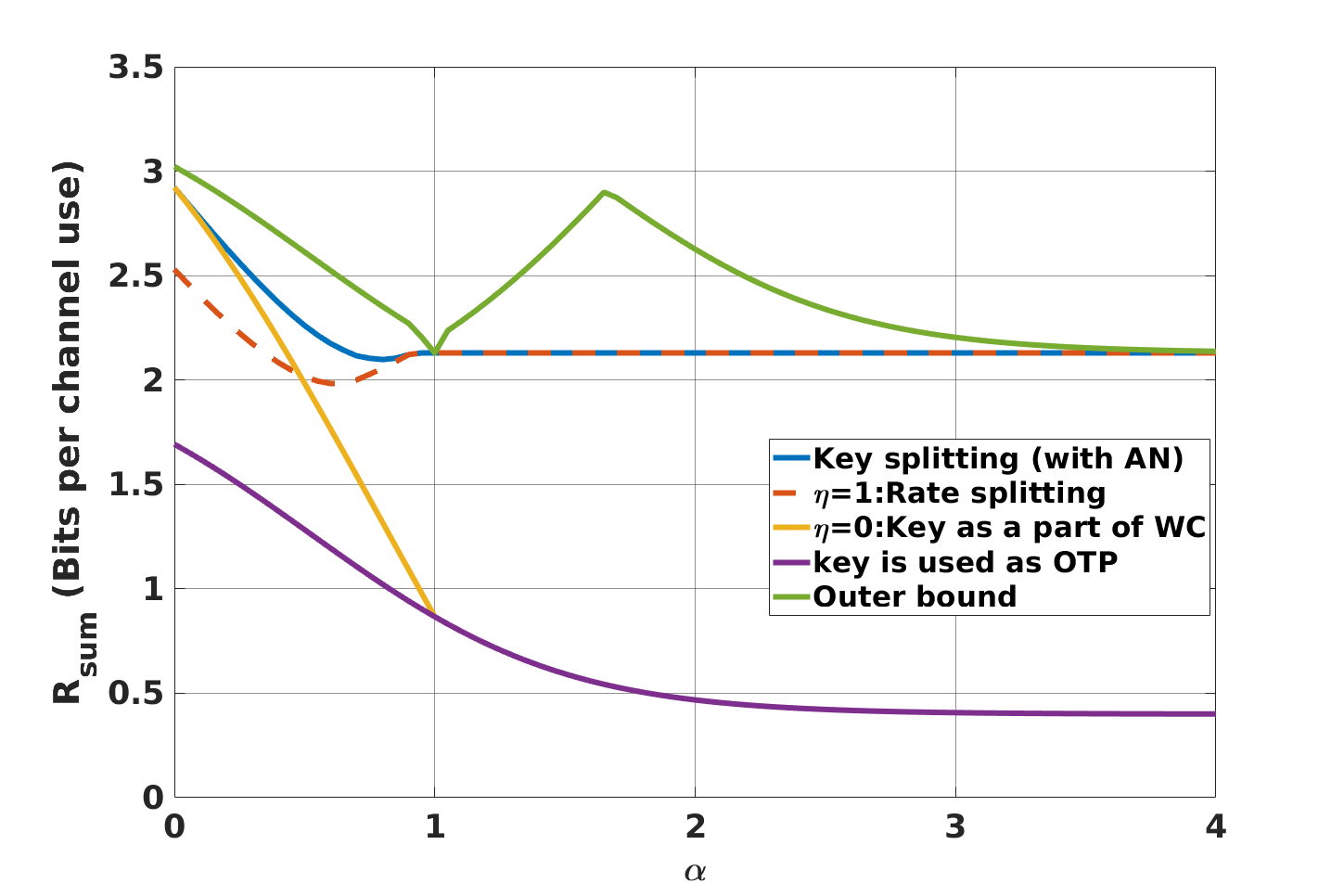}
		\caption{$R_K=0.4$}
	\end{subfigure}
	\begin{subfigure}[b]{0.49\textwidth}
		\includegraphics[width=1\textwidth]{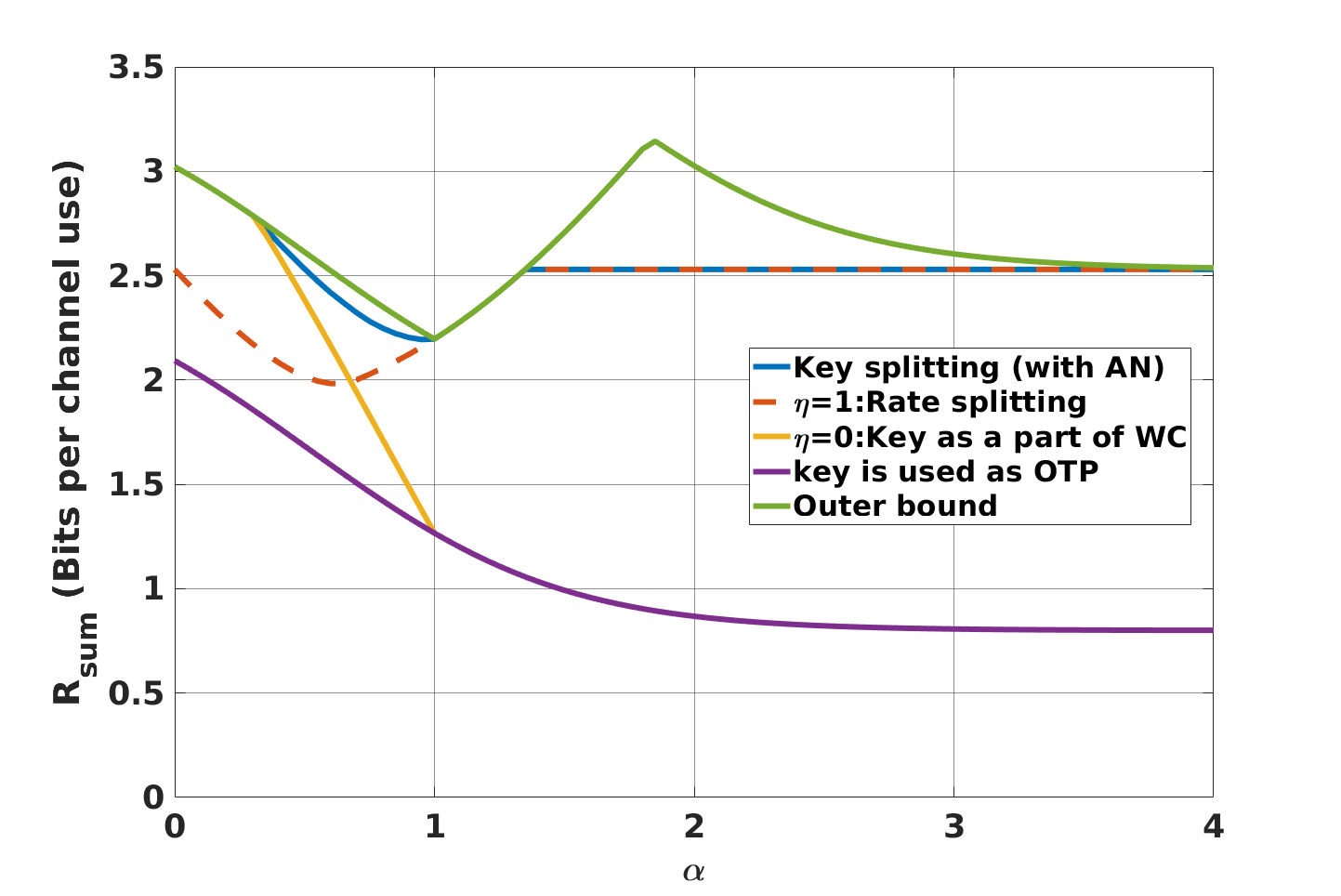}
		\caption{$R_K=0.8$}
	\end{subfigure}
	\caption{Sum rate of Z-IC for different schemes with $P_1=P_2=10$, $h_{11}=h_{22}=1$ and varying $\alpha$. In the legend OTP stands for one time pad.}\label{fig:sumrate vs alpha}
\end{figure*}
\begin{figure*}[!]
	\centering
	\begin{subfigure}[b]{0.49\textwidth}
		\includegraphics[width=1\textwidth]{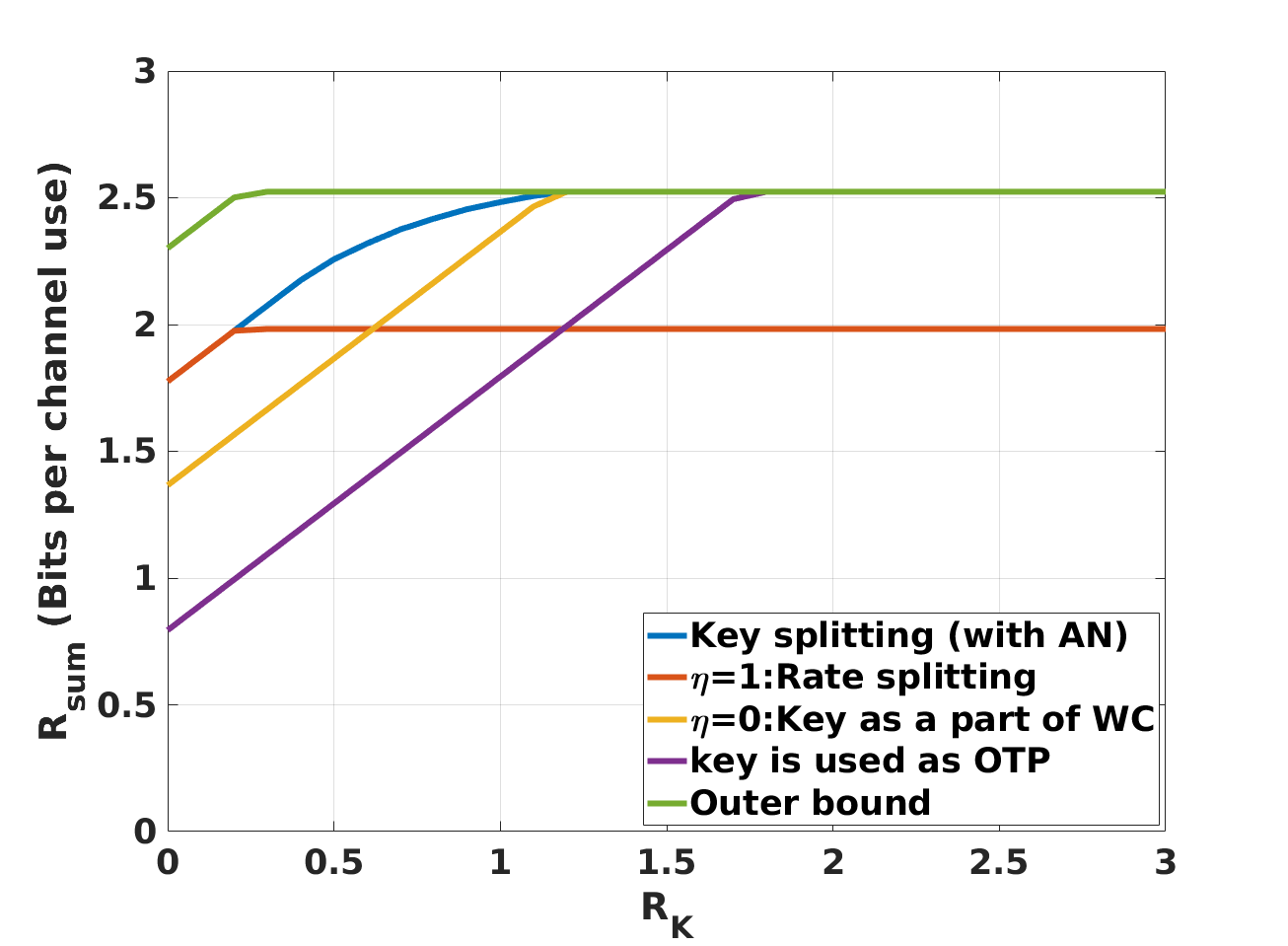}
		\caption{$\alpha=0.6$}
	\end{subfigure}
	\begin{subfigure}[b]{0.49\textwidth}
		\includegraphics[width=1\textwidth]{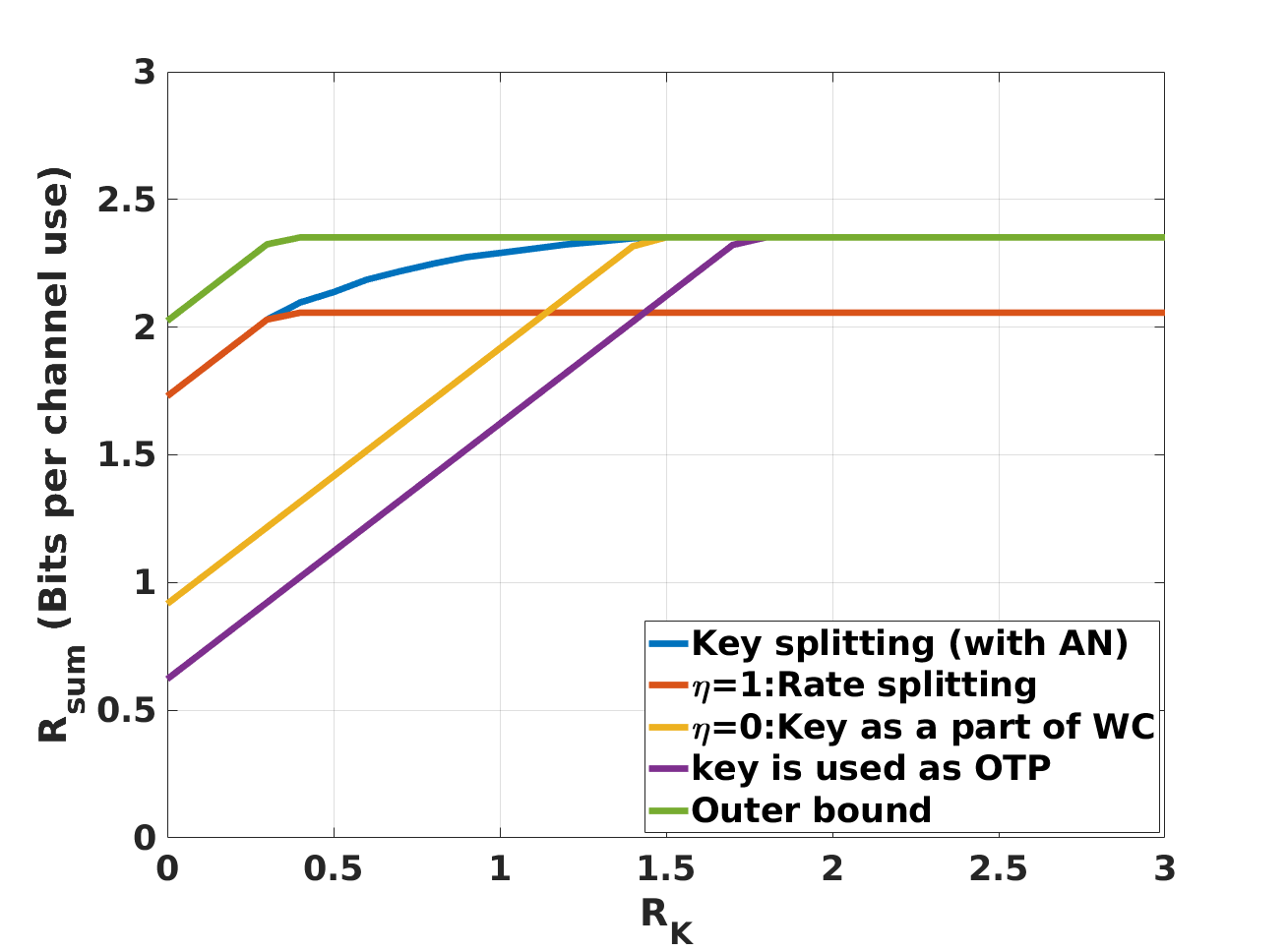}
		\caption{$\alpha=0.8$}
	\end{subfigure}
	\caption{Sum rate of Z-IC for different schemes with $P_1=P_2=10$, $h_{11}=h_{22}=1$ and varying $R_K$.}\label{fig:sumrate vs R_K}
\end{figure*}
In Fig. \ref{fig:sumrate vs alpha}, the achievable sum-rates are plotted for different schemes against $\alpha = \frac{\log h_c^2 P}{ \log h_d^2 P}$, where $h_d = h_{11}= h_{22}$ and $h_c = h_{21}$.  Along with the achievable sum rate, outer bound on the sum rate is also plotted. The outer bound is obtained by taking minimum of outer bounds in Theorem~\ref{Outer bound for weak/moderate interference regime}, Theorem~\ref{Outer bound for high interference regime} and  outer bounds without secrecy constraint in \cite{etkin-TIT-2008}. When $\alpha \geq 1$, outer bound in Theorem~\ref{Outer bound for weak/moderate interference regime} is not taken into consideration. The achievable sum rate for the scheme in Corollary~\ref{corollary-key is used as a part of wiretap coding} is plotted for the weak/moderate interference regime $(0 \leq \alpha \leq 1)$ only as this scheme is not applicable for $\alpha > 1$ as mentioned earlier. For the sum rate plot, both the transmitters use the full-power and no power control is used. When $R_K = 0.4$, the achievable scheme based on key rate splitting performs best among all the schemes and in the high interference regime $(\alpha \geq 1)$, the sum rate remains constant as the performance is limited by the key-rate. When the key rate is increased to $R_K = 0.8$, the scheme based on the key rate splitting performs best in the moderate interference regime $(0.5 \leq \alpha \leq  1)$. It can be seen that the achievable sum rate for the scheme based on key-rate splitting (Corollary~\ref{corollary-key splitting}) and rate splitting (Corollary~\ref{corollary-rate splitting approach}) increases with increase in the value of $\alpha$. In this case, using the key or some part of the key as a part of wiretap coding does not help from the sum-rate perspective.  For some ranges of $\alpha$, the schemes are found to be optimal as they coincide with the outer bound.
\begin{figure*}[!]
	\centering
	\begin{subfigure}[b]{0.49\textwidth}
		\includegraphics[width=1\textwidth]{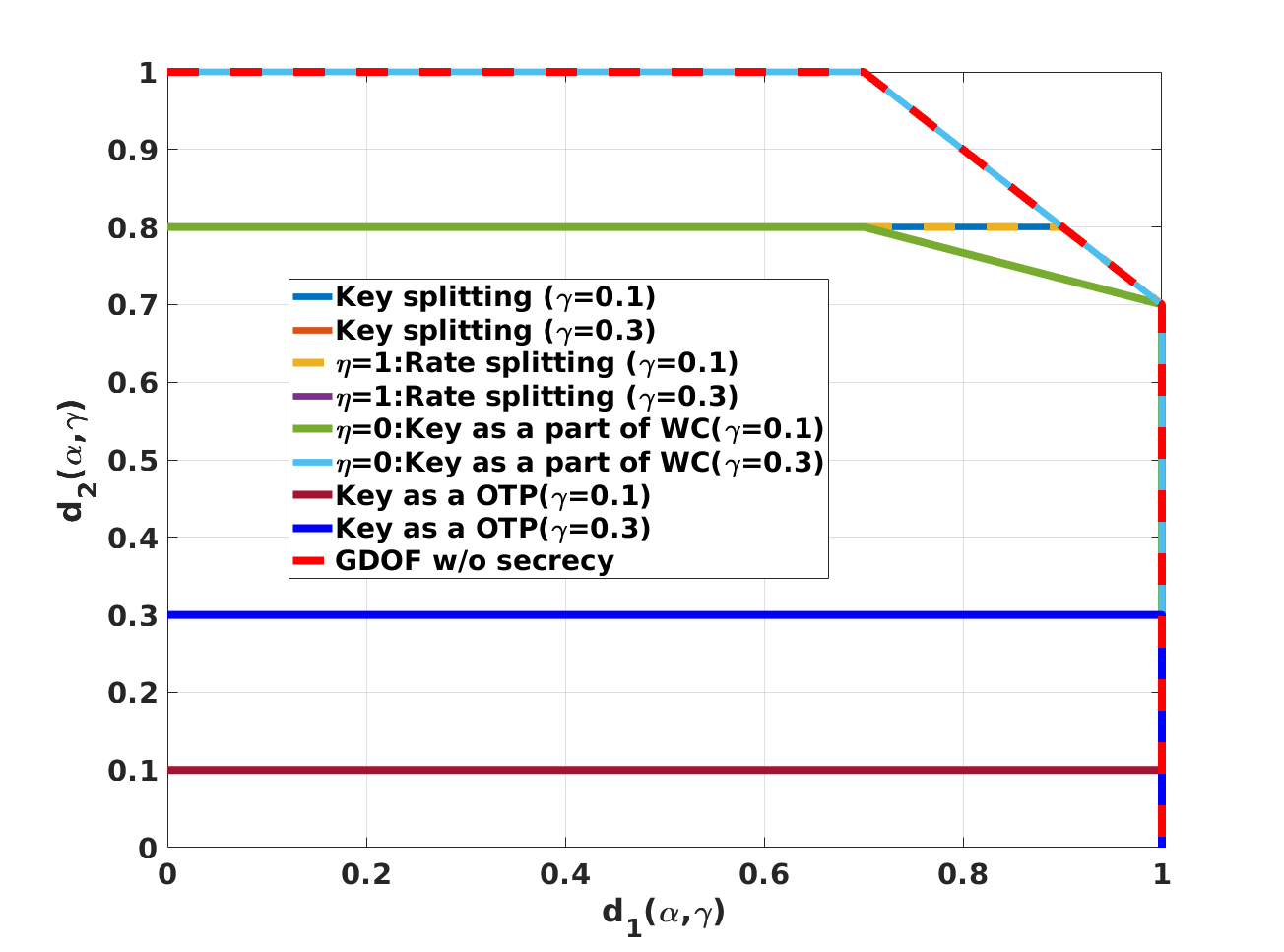}
		\caption{$\alpha=0.3$}
	\end{subfigure}
	\begin{subfigure}[b]{0.49\textwidth}
		\includegraphics[width=1\textwidth]{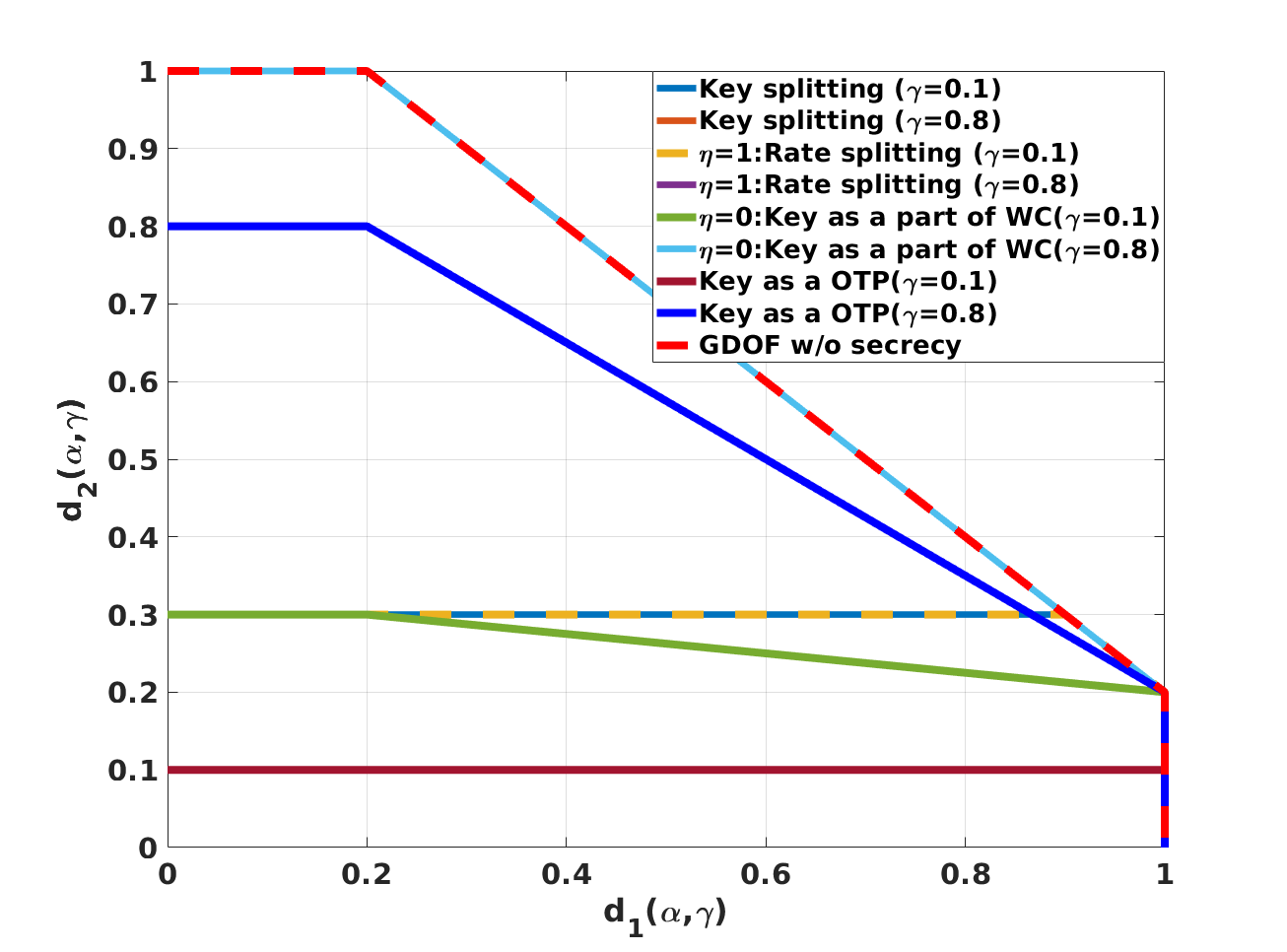}
		\caption{$\alpha=0.8$}
	\end{subfigure}
	\caption{GDOF regions for different schemes}.\label{fig:GDOF regions}
\end{figure*}

In Fig. \ref{fig:sumrate vs R_K}, the achievable sum-rates for different schemes are plotted against $R_K$ along with outer bound on the sum rate for different values of $\alpha$. From the plots, it can be observed that as the key rate increases from $0$ to $0.3$, the achievable scheme based on key rate splitting and rate splitting perform the same. With further increase in the value of $R_K$, the achievable scheme based on key rate splitting perform better compared to other schemes. This is due to the fact that for a given $\alpha$, as $R_K$ increases, using the key only for encoding the common confidential message can be sub-optimal as the rate associated with the common confidential message is limited by the amount of interference that can be decoded by receiver~$1$. Using some part of the key as a part of wiretap coding, can allow to send more private messages and hence, the sum rate increases. It is interesting to note that achievable scheme based on one-time pad can meet the outer bound but for higher value of key rate in comparison to other schemes.  When the value of $\alpha$ is increased to $0.8$, the sum rate of all achievable schemes decreases in comparison to $\alpha = 0.6$, but, the achievable scheme based on key rate splitting  outperforms other schemes.

In Fig.~\ref{fig:GDOF regions}, the SGDOF region in Theorems~\ref{gdof-key splitting}, \ref{theorem-gdof of key as a one time pad} and Corollaries \ref{gdof-rate splitting}, \ref{gdof-key is used as a part of WC} are plotted for different values of $\alpha$ and $\gamma$. It can be noticed that when $\gamma = 0.1$, the key splitting in Theorem~\ref{gdof-key splitting} and rate splitting in Corollary \ref{gdof-rate splitting} allows both the users to achieve higher GDOF in comparison to other schemes. For the considered power allocation scheme, there is no benefit of key-rate splitting in comparison to the rate splitting. It is interesting to note that when $\gamma = \alpha = 0.3~(\text{or }0.8)$, the key splitting scheme [Theorem \ref{gdof-key splitting}], rate splitting scheme [Corollary ~\ref{gdof-rate splitting}] and when the key is used as a fictitious message [Corollary ~\ref{gdof-key is used as a part of WC}] achieves the optimal GDOF as it coincides with the SGDOF region of the Z-IC without secrecy constraint \cite{etkin-TIT-2008}. When the key rate scales with $\alpha$, i.~e., $\gamma = \alpha$, there is no loss in the SGDOF performance due to the secrecy constraint at the receiver. From the GDOF regions, it can be seen that when user~$1$ achieves the maximum GDOF of 1, user~$2$ can achieve non-zero GDOF and vice-versa with the use of key. When SNR/INR are not very high, the benefits of splitting the key rate in improving the rate region can be seen from Figs.~\ref{fig:achievable rate regions for weak/moderate interference regime} - \ref{fig:sumrate vs R_K}.

\section{Conclusion}
 This paper considered a $2$-user Gaussian Z-IC with shared key of finite rate with secrecy constraint at receiver~$1$. The work proposes novel achievable scheme for the considered model where the achievable schemes differ from each other based on how the key has been used in the encoding process. The derived results illustrate the usefulness of key-rate splitting where some part of the key is used for the common confidential message and remaining part of the key is used in the wiretap coding. Outer bounds on the sum rate and secrecy rate of transmitter~$2$ are obtained using the secrecy constraint at receiver~$1$ and shared key of finite rate between the users. The work also characterizes the SGDOF of various schemes under weak/moderate interference regime which in turn help to analyze the scaling behaviour of key rate with respect to underlying channel parameters. 
\bibliographystyle{IEEEtran}
\bibliography{references}
\end{document}